\newcommand{\conf}{ \mathscr{C} }
\newcommand{\state}{ \ensuremath{\mathscr{S}} }
\newcommand{\sep}{ \mathds{APART} }
\newcommand{\R}{ \mathds{R} }
\newcommand{\Q}{ \mathds{Q} }
\newcommand{\prob}{ \mathds{P} }
\newcommand{\N}{ \mathds{N} }
\newcommand{\Rs}{ \mathds{R}^2 }
\newcommand{\oi}{ \{ 0;1 \} }
\newcommand{\angles}{ [ 0 ; \pi [}
\newcommand{\alg}{\ensuremath{\mathscr{A}}}
\newtheorem{lemma}{Lemma}
\newtheorem{definition}{Definition}
\newtheorem{theorem}{Theorem}
\newtheorem{conjecture}{Conjecture}
\begin{document}

\huge
 \begin{center}
Asynchronous Scattering

\hspace{10pt}

\Large
Ulysse L\'echine$^{1,2}$, S\'{e}bastien Tixeuil$^2$ \\

\hspace{10pt}

\small  
$^1$) \'{E}cole Normale Sup\'{e}rieure de Lyon, Lyon, France\\
ulysse.lechine [at] ens-lyon.fr\\
$^2$) Sorbonne Universit\'{e}, CNRS, LIP6, FR-75005, Paris, France\\
firstname.lastname@lip6.fr

\end{center}

\hspace{10pt}

\normalsize



\begin{abstract}
In this paper, we consider the problem of scattering a swarm of mobile oblivious robots in a continuous space. We consider the fully asynchronous setting where robots may base their computation on past observations, or may be observed by other robots while moving.

It turns out that asynchronous scattering is solvable in the most general case when both vision (the ability to see others robots positions) and weak local multiplicity detection are available. 
In the case of a bidimensional Euclidean space, ASYNC scattering is also solvable with blind robots if moves are rigid. 
Our approach is constructive and modular, as we present a proof technique for probabilistic robot protocols that is of independent interest and can be reused for other purposes. 

On the negative side, we show that when robots are both blind and have no multiplicity detection, the problem is unsolvable, and when only one of those is available, the problem remains unsolvable on the line.
\end{abstract}

\clearpage

\section{Introduction}

We consider mobile oblivious robots that evolve in a bidimensional Euclidean space~\cite{SuzukiY99,FlocchiniPS19}. Those robots are modeled as dimensionless points and repeatedly execute cycles that consist of Look-Compute-Move phases: they first obtain a snapshot of their environment using visual sensors (Look), then compute their next tentative location (Compute), and finally move toward their computed destination (Move). After a Look-Compute-Move cycle is complete, the local memory of a robot $r$ is erased before $r$ starts a new cycle (the robot is said \emph{oblivious}). 
In the semi-synchronous model (SSYNC in the current terminology~\cite{SuzukiY99,FlocchiniPS19}), a non-empty subset of robots is chosen by the scheduler (considered as an adversary) to simultaneously and atomically execute a Look-Compute-Move cycle. An even more restricted model is the fully synchronous model (FSYNC~\cite{SuzukiY99,FlocchiniPS19}), where the subset of scheduled robots is the full set of robots. While SSYNC and FSYNC model make it easier to write and prove robot protocols, they do not represent well the asynchronous behavior of actual robotic entities. For that purpose, the asynchronous model (ASYNC~\cite{FlocchiniPSW05,FlocchiniPS19}) has been introduced: in that model, each phase of the Look-Compute-Move cycle of a robot may occur at any time. In particular, the following events may occur: When a robot executes a Look phase, the other robots are not necessarily still; also, a robot follows a straight path to reach its destination, but not necessarily at a constant speed (though it cannot move back).  

The visual sensors of a robot $r$ output information about the $r$'s environment in an ego-centered coordinate system (which we denote in the sequel as \emph{frame}). It is generally assumed that robots' coordinate systems do \emph{not} share a common direction (same "North-South axis"), a common orientation (same "North"), a common chirality (same handedness), or even a common unit distance. 
In this ego-centered coordinate system, the output of the visual sensors consists in (possibly partial) information about the positions of the other robots (the current robot always being at the origin of the coordinate system). A first criterium to classify vision sensors is its \emph{range}~\cite{FlocchiniPSW05}: how far can it see other robots positions? In this paper we consider two kinds of robots: \emph{perceptive} robots can see every location occupied by at least one robot, while \emph{blind} robots cannot see any other location than the origin of the coordinate system. A second criterium for vision sensors is its ability to detect multiple robots placed at the exact same location~\cite{IzumiIKO13}: \emph{strong} multiplicity detection returns the exact number of robots placed on a given location, \emph{weak} multiplicity detection returns \emph{true} if a location hosts at least two robots, and \emph{false} if it contains exactly one robot, while \emph{no} multiplicity detection returns no information about a position other than whether it contains at least one robot. Multiplicity detection is \emph{local} if it can only be used on the observing robot's location, and is \emph{global} if it can be used on every occupied location within the range of the vision sensor (and this range is greater than $0$).

During the move phase, the move of the robot is either \emph{rigid} or \emph{flexible}~\cite{FlocchiniSVY16}. If robot moves are considered rigid, a robot always reaches its computed destination (possibly after some unbounded time in the ASYNC model). If robot moves are flexible, then the scheduler can interrupt a robot $r$ before it reaches its destination. Then $r$ starts a new Look-Compute-Move cycle. In the classical model~\cite{SuzukiY99}, the scheduler may not stop $r$ before it moves at least some distance $\delta>0$ (unknown to $r$) unless $r$ target destination is closer than $\delta$ (in that case, $r$ reaches its destination). In this paper, we instead consider a more extreme case where robot moves can be \emph{infinitesimal}: each robot $r$ can be interrupted after it moves some distance $\epsilon>0$, for any such $\epsilon$. 

The problem we consider is that of \emph{scattering}: from any initial situation where robots may occupy the same locations, any execution of a scattering protocol ends up in a situation where all robots occupy distinct positions. Termination (that is, reaching a fixed global state where all robots remain still) is not required, as long as robots remain at distinct locations. 
A deterministic scattering protocol for oblivious mobile robots is feasible if and only if there does not exists \emph{clone} robots~\cite{SuzukiY99}: two robots are clones if they start at the same position, are given the same coordinate system, and are always activated simultaneously by the scheduler. Two deterministic clone robots never separate. By contrast, a robot with no clone can simply move "North" by one distance unit when activated by the scheduler to separate from another robot initially located at the same position.
As clone robots cannot be avoided in general, most subsequent work studied probabilistic protocols. 
The first probabilistic scattering for the SSYNC model is due to Dieudonn\'{e} and Petit~\cite{DieudonneP09}: anytime a robot is activated, it tosses a coin, if the coin is $0$, the robot stays in place, otherwise it moves arbitrarily within its Vorono\"{\i} cell. Then, Cl\'{e}ment \emph{et al.}~\cite{ClementDPIM10} proposed an expected constant time scattering protocol for the SSYNC model: assuming knowledge of $n$, the number of robots, anytime a robot is activated, it chooses a destination randomly among $n^2$ ones, still within its Vorono\"{\i} cell. Recently, Bramas and Tixeuil~\cite{BramasT17} considered the random bit complexity of mobile robot scattering in the SSYNC model: anytime a robot is activated, it chooses randomly among $f$ destinations ($f$ depends on the observed configuration, so no knowledge of $n$ is assumed) within its Vorono\"{\i} cell. Bramas and Tixeuil~\cite{BramasT17} show that their solution is the best possible in SSYNC (both in terms of coin tossing and round complexity) when $n$ is not known.  
Overall, it turns out that the scattering problem for oblivious mobile robots was only investigated in the SSYNC model. The reason is that for all aforementioned protocols, the core proof argument relies on the fact that once two robots separate, they never occupy the same position ever again, hence the number of robots on a given multiplicity point \emph{always} decreases. In the ASYNC model, this property is no longer guaranteed using previous approaches. 
For example, consider the protocol of Dieudonn\'{e} and Petit~\cite{DieudonneP09}, and two stacks of $2$ and $3$ robots, respectively. First the stack of two robots is activated: one robot chooses to move to destination $d$ and the other chooses to stay, \emph{but the moving robot, called $r_1$, is put on hold}. Then the stack of $3$ robots is activated, $2$ of those $3$ robots, say $r_2$ and $r_3$ decide to move, and the other one decides to stay. Now $r_2$ is activated again, but its observation of other positions is updated (and so is its Vorono\"{\i} cell). So, it may turn out that $r_2$ chooses $d$ as its destination. Now, both $r_1$ and $r_2$ move to $d$, creating a \emph{new} multiplicity point.
So, solving the problem in ASYNC requires a fresh approach, at least with respect to the proof techniques.

We investigate the possibility of asynchronous scattering in various settings. In particular, we consider both perceptive and blind robots, both weak local multiplicity detection and no multiplicity detection, and both rigid and flexible moves. It turns out that ASYNC scattering is solvable in the most general case when both vision and weak local multiplicity detection are available. In the case of a bidimensional Euclidean space, ASYNC scattering is also solvable with blind robots if moves are rigid. Our approach is constructive and modular, as we present a proof technique for probabilistic robot protocols that is of independent interest and can be reused for other purposes. On the negative side, we show that when robots are both blind and have no multiplicity detection, the problem is unsolvable, and when only one of those is available, the problem remains unsolvable on the line. Unlike previous approaches~\cite{DieudonneP09,ClementDPIM10,BramasT17}, we consider the fully realistic ASYNC model, and unlike Clement \emph{et al.}~\cite{ClementDPIM10}, we do not consider that robots are aware of $n$. Our positive result do not assume that robots share the same chirality, while impossibility results remain valid if robots share a common chirality. 

The remaining of the paper is structured as follows. Section~\ref{sec:framework} details the mathematical framework we work with. Section~\ref{sec:proof} features a general result, which we believe is of independent interest, that is later used to prove our results about scattering in Section~\ref{sec:scattering} and \ref{sec:flexible}. Section~\ref{sec:complexity} presents preliminary complexity results, while Section~\ref{sec:computability} takes a computability perspective. Finally, Section~\ref{sec:conclusion} provides some concluding remarks.

\section{Framework}
\label{sec:framework}

\subsection{Notations}

$\R$ denotes the set of real numbers, $\Q$ the set of rational numbers, and $\N$ the set of natural numbers. The set of natural numbers except 0 is denoted by $\N^\star$. 
Let f be a function, we can also denote it as $f(\bullet)$. $\| \bullet \| : \Rs \hookrightarrow \R$ denotes the standard euclidean norm over $\Rs$, and $\langle \bullet , \bullet \rangle$ denotes the standard scalar product over $\Rs \times \Rs$. Let $E$ be a set, then $E^*$ denotes the set of all finite sequences of elements of $E$, $E^{\N}$ denotes the set of all sequences over $E$, and the set of infinite sequences $E^{ \infty} = E^{\N} \cup E^*$. Let $e \in E_1 \times E_2$ let $p$ and $q$ be the two elements such that $e=(p,q)$, then $e(1)= p$ and $e(2) = q$, we generalize this notation to the case where $e$ is composed of more than two elements.

\subsection{Execution Model}

We consider the oblivious Look-Compute-Move model (\emph{a.k.a.} the $\mathcal{OBLOTS}$ model) in an asynchronous setting. 

The current state of a robot is modelled as a 6-tuple $(l,d,r,s,a,c) \in \state$ such that:
\begin{enumerate}
\item $l\in \Rs$ is the current \emph{location} of the robot,
\item $d\in \Rs$ is the \emph{destination} of the robot,
\item $r\in \oi$ is $1$ if the robot is activated and $0$ otherwise,
\item $s\in \R, s>0$ is the \emph{scale} of the referential of the robot (the size of the unit distance),
\item $a\in \angles$ is the \emph{angle} of the referential of the robot (where is the north),
\item $c\in \oi$ is $1$ if the robot is right-handed and $0$ otherwise.
\end{enumerate}

We define the set of configurations $\conf$ as the set of combinations of the local states of the robots where at most one robot is activated in each configuration: $\conf = \{ (s_1, \ldots ,s_n) \in \state ^ n | \exists i \in \llbracket 1 ; n \rrbracket (s_i(3) = 1 \Rightarrow \forall j \in \llbracket 1 ; n \rrbracket , j \neq i \Rightarrow s_j(3) = 0 )  \}$. We say that the $k$-th robot is \emph{activated} if and only if $C(k)(3) = 1$, otherwise it's \emph{inactivated}. Having a single robot activated at any time does not weaken the ASYNC model as it still allows to simulate simultaneous executions, yet it permits to simplify notations in the sequel. 

Let $\sep$ be the set of configurations such that there is no location occupied by two or more robots, and every robot is immobile (its destination and its location are equal): $\sep = \{ (s_1, \ldots , s_n) \in \conf$ | $\forall i,j \in \llbracket 0 ; 1 \rrbracket ^ 2 , s_i(1)=s_i(2) \wedge s_i(1) \neq s_j(1)\}$. 

Let $\alg$ be the algorithm all robots execute. We consider two settings for visual sensors:
\begin{itemize}

\item If \emph{vision} is assumed $\alg$ receives as inputs the set of points where there is at least one robot. Said differently given a point in $\Rs$ robots know if there is at least one robot there but not how many. If \emph{vision} is not assumed the robot is said to be \emph{blind}
\item If \emph{local weak multiplicity} is assumed, $\alg$ receives as input a bit indicating if there is another robot on the same position as the robot running $\alg$.
\end{itemize}

An \textbf{execution} is an element of $\conf ^ \infty = C_1 \rightarrow C_2 \rightarrow C_3 \ldots $ such that every transition $C_i \rightarrow C_{i+1}$ is valid. Now, $C \rightarrow D$ is a valid transition for $\alg$ if and only if there exists a unique $k\in \N$ such that $C(k) \neq D(k)$ (\emph{i.e.} only one robot did something) and all of the following properties are satisfied:
\begin {itemize} 
\item At least one of the following conditions is true:
\begin{itemize}
\item \textbf{choice of frame} $C(k) = (p,p,0,r,\theta,b)$ and $D(k)=(p,p,1,r',\theta',b')$: the scheduler assigns a frame to robot $k$,
\item \textbf{computation} $C(k) = (p,p,1,r,\theta,b)$ and $D(k) = (p,q,0,r,\theta,b)$: the robot $k$ computes its destination $q$ using \alg, 
\item \textbf{ready to move} $C(k) = (p,q,0,r, \theta,b)$ where $p \neq q$ and $D(k) = (p,q,1,r, \theta,b)$: robot $k$ is about to move,
\item \textbf{move} $C(k) = (p,q,1,r, \theta)$ where $p \neq q$ and $D(k) = (q',q,0,r, \theta)$ where $q' \in ]p;q]$: robot $k$ moves towards its destination,
\item \textbf{move and interrupt} $C(k) = (p,q,1,r, \theta)$ where $p \neq q$ and $D(k) = (q',q',0,r, \theta)$ where $q' \in ]p;q]$: robot $k$ moves towards its destination, but may be interrupted prematurely by the scheduler.
\end{itemize}
\item If more than one condition is true, then computation, move, and move and interrupt transitions have priority over the other two.
\end{itemize}

Notice that transitions are of two types, the ones that inactivate robot $k$ (computation, move, move and interrupt), and the ones that activate it (choice of frame, ready to move). Due to the higher priority of the former set of transitions, the property that at most one robot is activated in any configuration is preserved throughout the execution. 
Starting from configuration $C_i$, if $\alg$ is deterministic, then there is at most one reachable configuration. If $\alg$ is a probabilistic algorithm, then there can be more than one reachable configurations. 
An execution $E$ is \textbf{fair} if all robots perform the \emph{computation} transition infinitely many times.  Given an algorithm $\alg$, we denote by $\mathscr{E}_{\alg}$ the set of executions conforming to $\alg$.\\

Let us now define \textbf{schedulers}. Intuitively the scheduler chooses when robots are activated, by how much they move when the movement is not rigid, etc. Let $\alg$ be an algorithm, let $\sigma : \conf ^ * \hookrightarrow \conf$ be a function. If $\forall C_1, \ldots, C_k, C_k \rightarrow \sigma(C_1, \ldots , C_k)$ is a valid transition for $\alg$, then $\sigma$ is said to be a scheduler for $\alg$. This ensures that all executions conform to $\alg$.

Given an initial configuration $C$, we define execution $E_{\sigma}^C$ as the execution starting from $C$ according to scheduler $\sigma$, and $E_{\sigma}$ is the set of all possible executions. If $\alg$ is deterministic, then $E_{\sigma}^C(0)=C$, and $\forall i \in \N^\star, \quad E_{\sigma}^C(i) = \sigma(E_{\sigma}^C(0), \ldots , E_{\sigma}^C(i-1))$. If $\alg$ is probabilistic, then $E_{\sigma}^C(i)$ is a random variable defined the same way as in the deterministic case, and $\sigma$ is a probabilistic function, that is to say it returns its output according to a certain distribution law given by the algorithm $\alg$ for the transitions of the kind \emph{computation}.

In the deterministic setting, $\sigma$ is a \textbf{fair scheduler} if $\sigma$ is a scheduler and $E_{\sigma}$ is fair. In the probabilistic setting, $\sigma$ is fair if $\prob (E_{\sigma} \text{ is fair})=1$. 
Given Algorithm $\alg$, the set of all possible schedulers conforming to $\alg$ is noted $\Sigma_{\alg}$. If the algorithm is implied, we simply write $\Sigma$. If a scheduler never chooses a \emph{move and interrupt} transition, the movements are called \textbf{rigid}.

The fine grained approach we used for transitions permits to simulate an arbitrary number of robots running simultaneously (even when other are moving). Two concurrent robots $A$ and $B$ could first sequentially execute \emph{choice of frame} and \emph{computation}, and at a later time execute (again sequentially) \emph{ready to move} and \emph{move}. A robot $A$ can be seen while moving by robot $B$ interleaving \emph{move} by $A$, \emph{computation} by $B$, and \emph{ready to move} by $A$. However, our modeling with at most one robot activated in any configuration permits to simplify proof notations.

\section{Theorem of Choice}
\label{sec:proof}

\begin{definition}[Safe Zone]
Let $\alg$ be an algorithm, let $F \subset \conf$, $F$ is called a \emph{safe zone} for $\alg$ if valid transistions only lead to configurations of $F$. Formally: $\quad \forall (\sigma , C) \in \Sigma_{\alg} \times F, \forall i \in \N, E^{C}_{\sigma}(i) \in F \Rightarrow E^{C}_{\sigma}(i+1) \in F$.
\end{definition}

\begin{lemma}
\label{safe}
Let $F \subset \conf $ and $\alg$ be an algorithm. If there exists a positive probability $p$ such that, starting from any configuration and for any scheduler, every execution reaches a configuration in $F$, then configurations in $F$ are reached with probability $1$. Formally: $\exists p \in ]0;1], \forall (C, \sigma) \in \conf \times \Sigma_{\alg}, \prob (\exists i \in \N, E^{C}_{\sigma}(i) \in F ) \geq p \Rightarrow \forall (C, \sigma ) \in \conf  \times \Sigma_{\alg}, \prob (\exists i \in \N, E^{C}_{\sigma}(i) \in F ) = 1$.
\end{lemma}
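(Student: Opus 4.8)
The statement is a standard "positive probability implies almost-sure" argument, so the plan is to iterate the hypothesis over independent epochs and use a Borel--Cantelli / geometric-series estimate. Here is the approach.

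\textbf{Setup.} Fix an arbitrary initial configuration $C$ and scheduler $\sigma$. Write $p$ for the uniform lower bound from the hypothesis. The key point is that the hypothesis is \emph{uniform}: from \emph{any} configuration and \emph{any} scheduler, $F$ is hit with probability at least $p$. I would like to say "so hitting $F$ within some bounded number of steps has probability at least $p$", but the hypothesis as stated only guarantees hitting $F$ eventually (over $i \in \N$), not within a fixed horizon.

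\textbf{Main step: a bounded-horizon version.} First I would argue that the hypothesis upgrades to: there exists $N \in \N$ such that from any $(C,\sigma)$, $\prob(\exists i \le N,\ E^C_\sigma(i) \in F) \ge p/2$ (say). This follows from continuity of measure: $\prob(\exists i \in \N, E^C_\sigma(i) \in F) = \lim_{N\to\infty} \prob(\exists i \le N, E^C_\sigma(i) \in F)$, so for each $(C,\sigma)$ some finite $N_{C,\sigma}$ works. The delicate part is making $N$ \emph{uniform} over the (uncountably many) pairs $(C,\sigma)$; this is the main obstacle. I would handle it either by a compactness argument on the configuration space, or — more robustly — by restructuring the proof to avoid needing a uniform horizon at all: partition $\N$ into consecutive blocks and, on the complement of the event "$F$ hit in block $k$", restart the analysis from the (random) configuration at the start of block $k{+}1$ with the shifted scheduler, invoking the hypothesis there. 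Concretely, define $A_k$ as the event that the execution is in $F$ at some time in the $k$-th block; the Markov-like structure of executions (the future depends on the past only through the current configuration and the tail of $\sigma$) gives $\prob(\overline{A_{k+1}} \mid \overline{A_0} \cap \cdots \cap \overline{A_k}) \le 1 - p$ once blocks are taken long enough — and crucially the hypothesis, applied to the conditional process, already encodes "eventually", so one only needs each block to be long enough for \emph{that particular} conditional configuration, which one can arrange by choosing block $k$'s length adaptively (it may depend on the randomness revealed so far, which is fine since we only need a measurable stopping-time-style construction).

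\textbf{Conclusion.} Given the above, $\prob(\overline{A_0} \cap \cdots \cap \overline{A_k}) \le (1-p)^{k+1} \to 0$, hence $\prob(\bigcup_k A_k) = 1$, i.e.\ $F$ is reached with probability $1$ from $(C,\sigma)$. Since $(C,\sigma)$ was arbitrary, this is exactly the conclusion. I expect the write-up to hinge on a clean statement of the "restart" property of executions (the analogue of the strong Markov property in this Look-Compute-Move formalism), which should be extractable from the definition of $E^C_\sigma$ as $\sigma$ applied to the history; once that is in hand, the probabilistic estimate is routine.
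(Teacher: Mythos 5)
Your proposal is correct and follows essentially the same route as the paper's proof: a per-configuration finite horizon $N(C,\sigma)$ obtained from the hypothesis by truncating the series $\sum_i \prob(F \text{ reached exactly at step } i)\geq p$ at mass $p/2$, adaptive epochs defined by restarting from the configuration reached at the end of each block with the shifted scheduler, and a product bound $\prod_i(1-p/2)=0$. The subtleties you flag (non-uniformity of $N$ over $(C,\sigma)$, and the restart/Markov property needed to multiply the conditional bounds) are exactly the ones the paper handles implicitly via its sequences $(n_i)$ and $(\sigma_i)$.
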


\begin{proof}
Let $p$ be a positive real number. For all $(C,\sigma) \in \conf \times \Sigma$, we have that:

\begin{align*}
\prob ( \exists i \in \N, E^{C}_{\sigma}(i) \in F ) = \ldots \\ 
& \hspace{-4cm} = \prob \left( \bigcup_{ i \in \N}  \text{from C, following $\sigma $, F is reached \emph{exactly} at the $i$-th transition} \right) \\
 & \hspace{-4cm} = \sum_{ i \in \N} \prob \left( \text{from C, following $\sigma $, F is reached \emph{exactly} at the $i$-th transition}\right) \geq p
\end{align*}

This implies that for all $(C,\sigma) \in \conf \times \Sigma$:
\begin{align*}
& \exists N \in \N, \sum_{i=0}^{N} \prob \left(\text{from $C$, following $\sigma$, $F$ is reached \emph{exactly} at the $i$-th transition} \right) \geq \frac{p}{2} \\
& \Rightarrow \exists N \in \N, \prob \left( \text{from $C$, following $\sigma$, $F$ is reached before  $N$-th transition}\right) \geq \frac{p}{2} 
\end{align*}

For each configuration $C$ and scheduler $\sigma$, we can thus define $N(C,\sigma)$ as the smallest integer $N$ such that: $\prob ( \text{from $C$, following $\sigma$, $F$ is reached before the $N$-th transition}) \geq \frac{p}{2}$. 

We now prove the lemma. Let $C$ be a configuration and $\sigma$ a scheduler. With execution $E^{C}_{\sigma}$ and $i \in \N$,
we define $shift (\sigma,E^{C}_{\sigma},i)$ as $\sigma(E^{C}_{\sigma}(0), \ldots , E^{C}_{\sigma}(i), \bullet)$. To simplify notations, $C_i$ denotes $E^{C}_{\sigma}(i)$ in the sequel.

Let $E^{C}_{\sigma}$ be a random execution starting from $C$. We define $(n_i)_{i\in \N}$ and $(\sigma_i)_{i \in \N}$ as $n_0=0$, $\sigma_0=\sigma$, and $\forall i \in \N^\star, n_{i+1}= n_{i} + N(C_{n_i}, \sigma_i)$, and $\sigma_{i+1}= shift (\sigma,E^{C}_{\sigma},n_{i+1})$. 
Now, from $C \in \conf$, following $\sigma$, we run $n_1$ transitions so that $F$ is reached with probability $\frac{p}{2}$. If we did reach $F$ we stop. Otherwise, from $C_{n_i}$, following $\sigma$, we run $n_2$ transitions so that $F$ is reached with probability $\frac{p}{2}$. If we did reach $F$, we stop. Otherwise we repeat the process. Overall the probability that $F$ is never reached is:

\begin{align*}
\prob(\text{from $C$, following $\sigma $, we never reach $F$})& = \prob(\forall i, E^{C}_{\sigma}(i) \notin F )\\
& = \prob(\bigcap_{i \in \N} \forall j \in \llbracket n_i, n_{i+1} \llbracket, E^{C}_{\sigma}(j) \notin F )\\
& = \prob(\bigcap_{i \in \N} \forall j \in \llbracket n_i, n_{i+1} \llbracket, E^{C_{n_i}}_{\sigma_i}(j) \notin F )\\
&\leq \prod_{i \in \N} 1-\frac{p}{2}\\
& = 0
\end{align*}
\end{proof}

The following theorem is a direct corollary of Lemma~\ref{safe} using the closure property of a safe zone.
\begin{theorem}[Choice]
\label{thm:choice}
Let $\alg$ be an algorithm, let $F \subset \conf $ be a safe zone for $\alg$, if $\exists p \in ]0;1], \forall (C, \sigma ) \in \conf  \times \Sigma_{\alg}, \prob ( \exists i \in \N, E^{C}_{\sigma}(i) \in F ) \geq p $, then $\forall (C, \sigma ) \in \conf \times \Sigma_{\alg}, \prob ( \exists N \in \N, \forall i \in \N, i \geq N \Rightarrow E^{C}_{\sigma}(i) \in F ) = 1$
\end{theorem}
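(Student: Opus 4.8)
The plan is to derive the theorem directly from Lemma~\ref{safe} together with the closure property built into the definition of a safe zone. First I would invoke Lemma~\ref{safe}: its hypothesis coincides verbatim with the hypothesis of the theorem, so it yields that for every $(C,\sigma) \in \conf \times \Sigma_{\alg}$ we have $\prob(\exists i \in \N, E^{C}_{\sigma}(i) \in F) = 1$; that is, the execution almost surely enters $F$ at some point.

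Next I would upgrade ``enters $F$ at some point'' to ``enters $F$ and never leaves it''. This is precisely where the safe zone hypothesis is used: by definition, for any $\sigma$ and any starting configuration, $E^{C}_{\sigma}(i) \in F$ implies $E^{C}_{\sigma}(i+1) \in F$, and a trivial induction on $j$ then gives that $E^{C}_{\sigma}(i) \in F$ implies $E^{C}_{\sigma}(j) \in F$ for all $j \geq i$. Consequently, on the event $\{\exists i \in \N, E^{C}_{\sigma}(i) \in F\}$ one may take $N$ to be the least such index (the first hitting time of $F$), and then $\forall i \geq N,\ E^{C}_{\sigma}(i) \in F$. Hence the event $\{\exists i \in \N, E^{C}_{\sigma}(i) \in F\}$ is contained in the event $\{\exists N \in \N, \forall i \in \N, i \geq N \Rightarrow E^{C}_{\sigma}(i) \in F\}$.

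Finally I would combine the two observations: the smaller event has probability $1$ by the first step, so the larger event also has probability $1$, which is exactly the asserted conclusion, uniformly over all $(C,\sigma) \in \conf \times \Sigma_{\alg}$.

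The argument is essentially bookkeeping, so I do not expect a real analytical obstacle; the only point deserving a little care is measurability — namely that the first hitting time of $F$ is a genuine random variable and that the ``absorbing'' event is measurable, so that the containment of events can legitimately be turned into an inequality of probabilities. Since $F$ is a fixed subset of $\conf$ and executions are built transition by transition, this is routine, but it is worth stating explicitly rather than glossing over it, especially since $\sigma$ may itself be a probabilistic function.
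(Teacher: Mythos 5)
Your argument is correct and is exactly the paper's route: the theorem is stated there as a direct corollary of Lemma~\ref{safe}, with the safe-zone closure property upgrading ``$F$ is reached'' to ``$F$ is reached and never left''. Your additional remark on measurability of the first hitting time is a harmless (and reasonable) elaboration, not a deviation.
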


So, if $F$ be a safe zone, if there is a positive probability $p$, such that for any scheduler $\sigma$ and any \textbf{reachable} configuration $C$, any execution from $C$ that follows $\sigma$ reaches a configuration in $F$ with probability $p$, then $F$ is reached with probability 1. A reachable configuration is a configuration with a positive probability of occurring in the execution.
Note that the converse is also true: If $F$ is reached with probability $1$, then starting from any reachable configuration and following any scheduler, $F$ is reached with probability $1>0$.

While Theorem~\ref{thm:choice} is very general, we mostly use it for the purpose of scattering robots: our safe zone $F$ is a set of configurations where robots remain separated whatever the subsequent execution, and we want that starting from any initial configuration $C$ (possibly with already moving robots) and following any scheduler $\sigma$, algorithm $\alg$ reaches $F$ with probability $1$. Then, using Theorem~\ref{thm:choice}, it becomes sufficient to exhibit a pattern of random bits used by $\alg$ so that the robots reach $F$ with probability $p>0$, and prove that $p$ is independent of $C$ and $\sigma$.

\clearpage

\section{Blind Rigid ASYNC Scattering in $\R^2$}
\label{sec:scattering}

In this section, we consider the rigid setting when all robots are blind but are endowed with weak local multiplicity detection.

\subsection{Algorithm}

\begin{algorithm}
\caption{Blind Rigid ASYNC Scattering in $\R^2$}\label{scattering}
\begin{algorithmic}[1]
\If {there is no other robot at my position} 
	\State Do nothing
\Else
	\State $i \gets 0$
	\State $r \gets rbit()$
	\While {$r=0$}
		\State $i \gets i+1$
		\State $r \gets rbit()$
	\EndWhile
	\State \textbf{move to \textbf{u}(i)}
\EndIf
\end{algorithmic}
\end{algorithm}

The pseudo-code for our algorithm is presented as Algorithm~\ref{scattering}. Lines $5$ and $8$ use the \textit{rbit()} function that returns a $1$ or a $0$ with equal probability. The while loop is used to select a random natural number such that value $n$ is obtained with probability $2^{-n}$. The keyword \textbf{move to} takes as input an element of $\Rs$ and updates the destination of the robot. Finally, $u(n)$ is a computable sequence of points in $\Q^2$ such that there exists a circle containing all $u(n)$, and $\forall i,j \quad u(i) \neq u(j)$ (such a sequence does exist).

\begin{figure}[h]
 \centering
 \includegraphics[width=.5\textwidth]{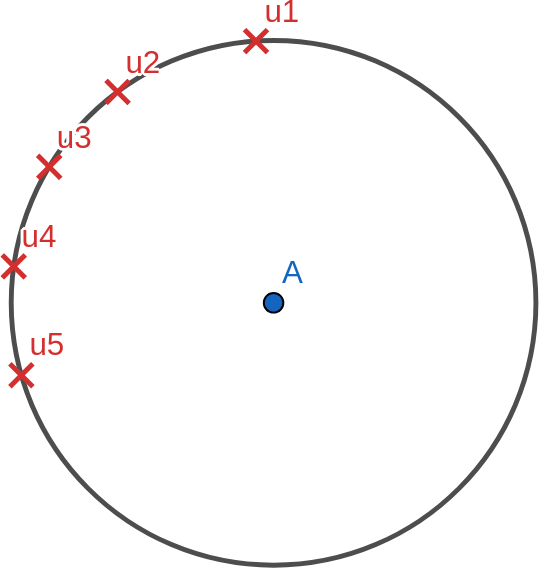}
 \caption{Five positions robot $A$ can go to}
 \label{cercle}
\end{figure}

\subsection{Proof of correctness}

\begin{lemma}
$\sep$ is a safe zone for Algorithm~\ref{scattering}.
\end{lemma}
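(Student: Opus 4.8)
The plan is to unfold the definition of a safe zone and argue by a direct case analysis on the type of valid transition. Fix a configuration $C \in \sep$ and a valid transition $C \rightarrow D$ for Algorithm~\ref{scattering}. By the definition of a valid transition there is a unique robot $k$ with $C(k) \neq D(k)$, all other local states being preserved; hence it suffices to show that in $D(k)$ the location still equals the destination and that robot $k$'s location is unchanged, since then the pairwise distinctness of positions is inherited from $C$ and $D \in \sep$.

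Since $C \in \sep$, robot $k$ is immobile in $C$, i.e. $C(k) = (p,p,\mathrm{act},s,\theta,b)$ for some $p \in \Rs$ and some frame parameters $s,\theta,b$. This immediately rules out the transition types \emph{ready to move}, \emph{move}, and \emph{move and interrupt}, each of which requires the location and destination components of $C(k)$ to differ. So the transition is either \emph{choice of frame} or \emph{computation}.

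For \emph{choice of frame}, $C(k)=(p,p,0,s,\theta,b) \rightarrow D(k)=(p,p,1,s',\theta',b')$: only the activation bit and the frame parameters change, so robot $k$ remains immobile at $p$ and $D \in \sep$. For \emph{computation}, $C(k)=(p,p,1,s,\theta,b)\rightarrow D(k)=(p,q,0,s,\theta,b)$, where $q$ is the output of running Algorithm~\ref{scattering} in robot $k$'s frame. Because $C \in \sep$, no other robot shares robot $k$'s location, so the weak local multiplicity detection returns \emph{false}; the algorithm therefore takes its first branch (\emph{Do nothing}), which leaves the destination equal to the current location, i.e. $q = p$. Thus $D(k) = (p,p,0,s,\theta,b)$, robot $k$ is still immobile at $p$, and $D \in \sep$.

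Since every valid transition out of a configuration of $\sep$ leads back to $\sep$, the closure condition in the definition of a safe zone holds for every scheduler and every execution, so $\sep$ is a safe zone for Algorithm~\ref{scattering}. The only point that deserves care — and the one I would state explicitly — is the interpretation of the \emph{Do nothing} branch as "keep the destination equal to the current location"; once that reading is pinned down, the argument is a mechanical check against the list of valid transitions, and in particular it involves no probabilistic reasoning, the safe-zone property concerning all valid transitions rather than only those of positive probability.
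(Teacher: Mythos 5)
Your proof is correct and is essentially the paper's argument, only spelled out in full: the paper's one-line proof ("a robot does nothing when there is no other robot at its position") is exactly your case analysis compressed, since in $\sep$ every robot is immobile so only \emph{choice of frame} and \emph{computation} transitions are possible, and the latter keeps the destination equal to the location. Your explicit check of the transition types and of the \emph{Do nothing} branch is a faithful, more detailed rendering of the same closure argument.
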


\begin{proof}
Starting from a configuration in $\sep$, a robot does nothing when there is no other robot at its position (Line 2).
\end{proof}

\begin{lemma}
\label{sepalg}
With $F=\sep$, and $\alg=$Algorithm~\ref{scattering}, the conditions of Lemma~\ref{safe} are satisfied.
\end{lemma}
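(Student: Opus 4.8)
The goal is to verify the hypothesis of Lemma~\ref{safe} with $F = \sep$ and $\alg = $ Algorithm~\ref{scattering}: namely to exhibit a single $p \in \;]0;1]$ such that from \emph{every} configuration $C$ and under \emph{every} scheduler $\sigma$, the execution $E^{C}_{\sigma}$ reaches $\sep$ with probability at least $p$. Since the robots are blind with weak local multiplicity detection and moves are rigid, I would argue by exhibiting one explicit ``good'' pattern of random bits (and corresponding scheduler-independent outcome) whose probability is bounded below by a constant depending only on $n$, the number of robots, which is fixed for a given execution. The key observation driving the whole argument is that, because moves are rigid, once a robot executes its \emph{computation} transition with a destination $u(i)$, it will eventually arrive exactly at $u(i)$; and the $u(i)$ are pairwise distinct points on a common circle, chosen in $\Q^2$ precisely so that ``generic'' choices land all robots at distinct locations.

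The main steps, in order, would be: (1) Fix an arbitrary configuration $C$ and scheduler $\sigma$. Observe that from $C$, by fairness only of the pieces we need (we do not need full fairness here, just that each robot gets activated once more — but actually we must be careful: $\sigma$ need not be fair, so I would instead rely on the structure of valid transitions to show every robot that is still mobile or stacked will, under \emph{any} $\sigma$, get a chance to compute, OR already be separated). Actually the cleaner route: partition robots into those currently at a multiplicity point and those alone. Robots alone ``do nothing'' (Line~2) and stay put as long as no one joins them. (2) For each robot, consider the next time it performs a \emph{computation} transition: Algorithm~\ref{scattering} makes it draw a geometric-like index $i$ and move to $u(i)$. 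I would show that there is a choice of indices $i_1, \dots, i_n$ — one per robot — that are pairwise distinct, and also distinct from the (finitely many, rational or not) ``forbidden'' positions, such that after all robots have moved to their respective $u(i_k)$, all $n$ robots occupy distinct positions; since the $u(i)$ are all distinct and there are infinitely many of them, such a tuple of indices always exists regardless of $C$. (3) Bound below the probability of drawing exactly those indices: index $i_k$ has probability $2^{-(i_k+1)}$, so the joint probability (the draws being independent across robots and across activations) is at least $\prod_k 2^{-(i_k+1)} =: p_0 > 0$. The subtlety is that $\sigma$ controls \emph{when} robots compute and in what order, and a robot already alone does nothing — so I would argue that whatever $\sigma$ does, either $\sep$ is reached ``early'' or eventually every robot still sharing a location computes, and the relevant bit-pattern event has probability $\geq p_0$; I would set $p = p_0$ and note it does not depend on $C$ or $\sigma$, only on $n$.

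The delicate point — and the one I expect to be the main obstacle — is handling the asynchrony and the scheduler's freedom honestly: a robot that moves to $u(i)$ may be \emph{observed} and may even be the cause of a robot \emph{leaving} a multiplicity point and then, later, another robot being put back onto it. Concretely one must rule out the ASYNC pathology flagged in the introduction (the Dieudonné–Petit counterexample), where a delayed mover and a freshly-computing robot collide at the same destination. Here that is avoided precisely because all destinations are drawn from the \emph{fixed} sequence $u(\bullet)$ rather than from a configuration-dependent Voronoï cell: if the $n$ chosen indices are pairwise distinct, no two robots ever share a destination, so no new multiplicity point on the circle is created, and a robot that has moved to its $u(i_k)$ and finds itself alone there will do nothing forever after. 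I would therefore need to (a) argue that ``the $n$ robots draw pairwise-distinct indices over their next computations'' is a well-defined positive-probability event under any $\sigma$, and (b) check that on that event the execution genuinely lands in $\sep$ and stays — the ``stays'' part being exactly the preceding lemma that $\sep$ is a safe zone. Once (a) and (b) are in hand, Lemma~\ref{safe} applies verbatim and yields that $\sep$ is reached with probability $1$.
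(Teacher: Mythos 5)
There is a genuine gap, and it sits exactly at your step (2)/(a): the event ``the $n$ robots draw pairwise-distinct indices'' does not imply distinct destinations, because the robots are \emph{blind} and each destination $u(i)$ is interpreted in the robot's own frame, whose scale, orientation and chirality are chosen \emph{adversarially and adaptively} by the scheduler at each \emph{choice of frame} transition, and whose origin is the robot's current location. There is no single global circle: each robot has its own image of $\{u(i)\}_{i\in\N}$, translated to its position and rescaled/rotated at the scheduler's whim. Concretely, after robot $1$ lands at some global point $x_1$, the scheduler can give robot $2$ a frame in which $u(i_2)$ (whatever fixed ``good'' index your tuple assigns to it) is mapped exactly onto $x_1$; so for every fixed tuple $(i_1,\dots,i_n)$ of distinct indices there is a scheduler under which that tuple leads straight back to a multiplicity point, and your check (b) fails. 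Conversely two robots drawing the \emph{same} index may be perfectly fine. Distinctness of indices is simply not the right event, and no tuple chosen in advance, independently of the scheduler's adaptive frame choices, can work; the probability bound has to be established per computation, uniformly over the (current, adversarial) frame.

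This is how the paper's proof proceeds, and it also repairs the second weakness of your sketch: you only budget for ``the next computation'' of each robot, but a mover may pause mid-path or land on, or be caught on, another robot's position, triggering further computations that your probability accounting does not cover. The paper's per-step event $(\ast)$ requires the freshly computed destination $q$ of robot $k$ to avoid every other robot's segment $[p_i,q_i]$ (which also covers still robots, for which the segment is a point) \emph{and} requires $k$'s own path $[p,q]$ to contain no other robot's destination $q_i$. For any frame these forbidden sets exclude at most $2n$ of the $u(i)$'s, because the images of the $u(i)$ lie on a circle and each forbidden segment meets it in at most two points, so a good choice has probability at least $2^{-2n}$ regardless of the frame; a separate combinatorial argument then shows that if $(\ast)$ holds at every computation, no robot computes twice while non-alone, so at most $n$ such computations occur before $\sep$ holds, giving the uniform bound $p\geq 2^{-2n^2}$. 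Your proposal correctly identifies the Dieudonn\'e--Petit pathology and the overall safe-zone strategy, and correctly notes that initial pending destinations must be treated as forbidden, but the fixed-distinct-indices event must be replaced by this frame-uniform, per-computation avoidance argument (plus the bound on the number of computations) for the hypothesis of Lemma~\ref{safe} to be met.
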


\begin{proof}
Let $\sigma, C \in \Sigma , \conf$. Step by step, we construct a sequence of positions that reach $\sep$ with probability $p>0$, independently of $\sigma$ and $C$. We denote by $(p_1, \ldots ,p_n)_t$ the positions occupied by the robots at step $t$, and $(q_1, \ldots , q_n)_t$ the positions the robots want to go to at step $t$. Initially, we define:
\begin{itemize}
\item $(p_1, \ldots ,p_n)_0 = (C(1)(1), \ldots, C(n)(1))$
\item $(q_1, \ldots , q_n)_0 = (C(1)(2), \ldots, C(n)(2))$
\end{itemize}
Suppose at step $t$, the current configuration is $D$, when the scheduler selects a \textbf{computation} transition from $D$ to $D'$ for a robot $k$ on a multiplicity point, the robot chooses its destination for step $t+1$, and we define:
\begin{itemize}
\item $(p_1, \ldots ,p_n)_{t+1} = (D(1)(1), \ldots, D(n)(1)), \left( = (D'(1)(1), \ldots, D'(n)(1)) \right)$
\item $(q_1, \ldots ,q_n)_{t+1} = (D'(1)(2), \ldots, D'(n)(2))$
\end{itemize}
There is a probability $\rho >0$ that $\forall i \in \llbracket 0 ; n \rrbracket, i\neq k \Rightarrow q \notin [p_i,q_i] \wedge q_i \notin [p,q]$. We call this event the $(\ast)$ event. This events means that the robot chose a destination that is not on another robot's path ($q \notin [p_i,q_i]$), and that does not make the robot cross another robot's destination ($q_i \notin [p,q]$). As there are at most $2n$ points that can be selected by $k$ and do not satisfy these two properties (see Figure~\ref{droit}), there is a probability at least $2^{-2n}$ of choosing a point that does satisfy those two conditions.

\begin{figure}[h!]
 \centering
 \includegraphics[width=.9\textwidth]{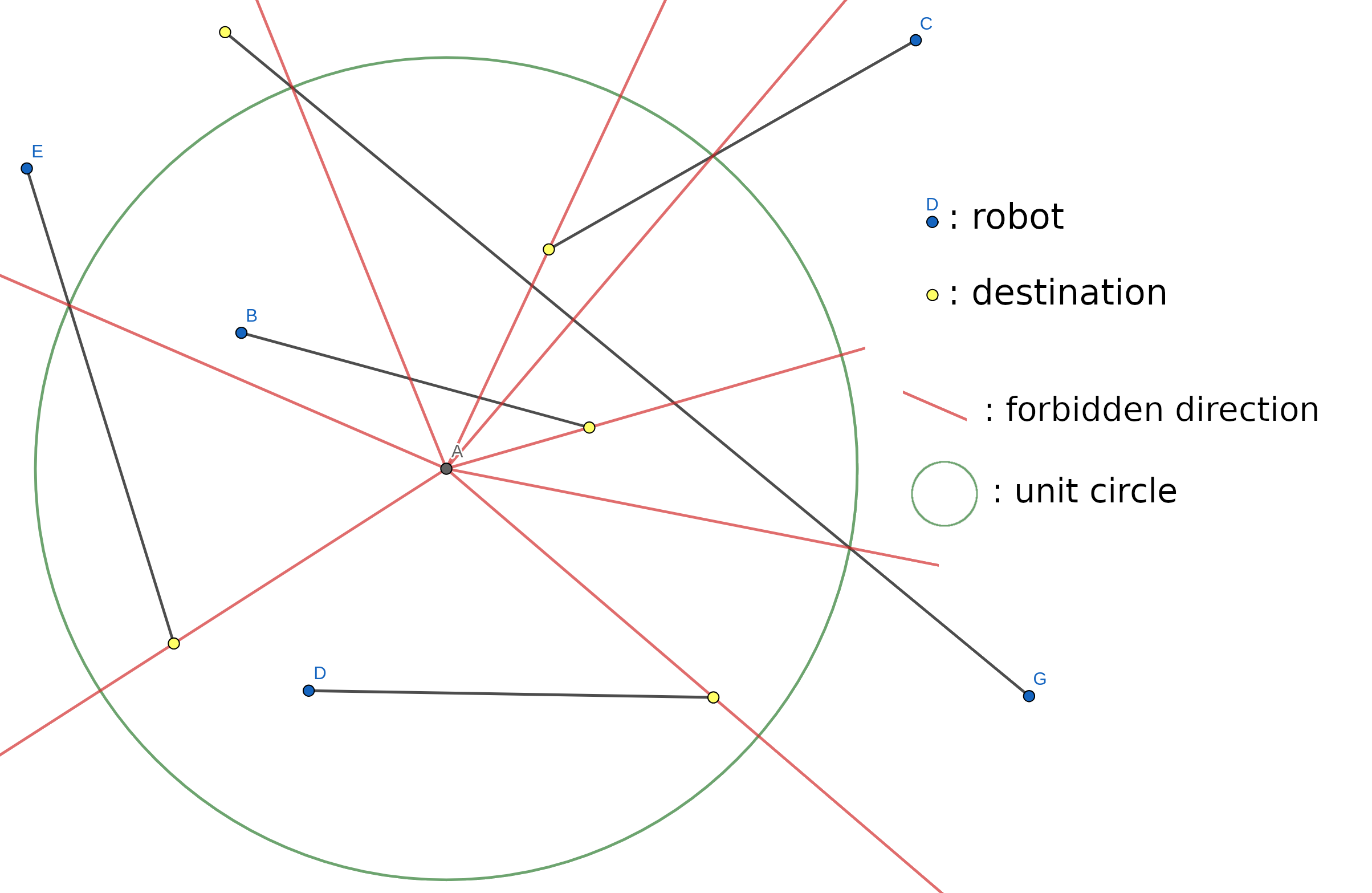}
 \caption{Possible destinations for $A$ so that ($\ast$) occurs.}
 \label{droit}
\end{figure}
 
Assume that at each step $(\ast)$ happens then we claim that we do at most $n$ steps. Indeed if we do $n+1$ steps then it means that a robot computed twice whilst not alone (from now on we only consider the computation the robot did "whilst not alone"), let us say it is the $k$-th robot who is the first robot to compute twice. If the robot computes twice then it means that at the time of its second computation it is not alone on its position, the h-th robot is also there. Let us call q the current position of the two robots (notice that q was also the result of the first computation of k), p the position where k was when it did its first computation, p' the position where h was when k did its first computation. Let $q'$ be the destination of h when k does its second computation. There are two possibilities : either the robot h had $q'$ in mind before the first computation of k or it had it after. 

\begin{itemize}
\item if $h$ had it before: then when $k$ did its first computation, it got $q \in [p';q']$, which is impossible by assumption (by the condition $q \notin [p_h,q_h]$),
\item if $h$ had it after: then $h$ did a computation while at position $e$ and chose $q'$ as a destination, with $q \in [e,q']$, which is impossible by assumption (by the condition $q_k \notin [e;q']$).
\end{itemize}

Since $(\ast)$ has probability at least $2^{-2n}$ of occurring, it has probability at least $2^{-2n^2}$ of occurring $n$ times in a row (notice that this probability is independent of $\sigma$ and $C$). Now, if it occurs $n$ times in a row, then $\sep$ is reached. Therefore $\sep$ has a probability at least $2^{-2n^2}$ of being reached from $C$, following $\sigma$.
\end{proof}

\begin{theorem}
\label{theo2Drigid}
Algorithm \ref{safe} solves the asynchronous rigid scattering problem in $\R^2$.
\end{theorem}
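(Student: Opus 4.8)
The plan is to assemble Theorem~\ref{theo2Drigid} directly from the two preceding lemmas together with the Theorem of Choice (Theorem~\ref{thm:choice}). First I would recall that scattering, as defined in the introduction, requires that every execution eventually reaches a state in which all robots occupy distinct locations and stay there; in the formalism of Section~\ref{sec:framework} this is exactly the requirement that $\sep$ is eventually reached and never left. So it suffices to show that, for Algorithm~\ref{scattering}, from every configuration $C$ and under every scheduler $\sigma$, the execution $E^C_\sigma$ almost surely reaches a configuration from which it stays in $\sep$ forever.

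The key steps, in order, are: (1) invoke the first lemma of Section~\ref{sec:scattering}, which establishes that $\sep$ is a safe zone for Algorithm~\ref{scattering} (once all robots are separated and immobile, Line~2 keeps them there); (2) invoke Lemma~\ref{sepalg}, which gives the uniform positive lower bound $p = 2^{-2n^2} > 0$ on the probability of reaching $\sep$ from any configuration under any scheduler; (3) feed these two facts into Theorem~\ref{thm:choice} with $F = \sep$, which yields $\forall (C,\sigma),\ \prob(\exists N\ \forall i \geq N,\ E^C_\sigma(i) \in \sep) = 1$; (4) observe that the rigidity hypothesis is already built in, since Lemma~\ref{sepalg} was proved for an arbitrary scheduler in $\Sigma_{\alg}$ (rigid schedulers form a subset, so the bound applies a fortiori), and in fact the argument there only used \emph{computation} transitions on multiplicity points, so rigidity is not even essential here; (5) translate the probability-$1$ eventual-containment statement back into the scattering specification, i.e.\ conclude that almost every execution of Algorithm~\ref{scattering} is a solution to the asynchronous rigid scattering problem in $\R^2$.

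I do not expect any real obstacle, since the substance of the proof has been discharged into Lemma~\ref{sepalg}; the only point requiring a little care is step~(5), the matching of the informal problem statement to the formal one. Concretely, one should note that ``$F$ is a safe zone'' plus ``$F$ is reached with probability $1$'' together say precisely that with probability $1$ the execution enters $\sep$ and remains there, which is the definition of solving scattering with termination of movement — and the stated problem only asks for the weaker property that robots remain at distinct locations, so this is more than enough. One should also double-check the edge case $n = 1$ (a single robot), where $\sep$ is trivially the whole configuration space restricted to immobile states and the algorithm does nothing, so the claim holds vacuously. A minor remark worth including is that the theorem statement writes ``Algorithm~\ref{safe}'' where it means Algorithm~\ref{scattering}; the proof should reference Algorithm~\ref{scattering}.

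\begin{proof}
By the first lemma of this section, $\sep$ is a safe zone for Algorithm~\ref{scattering}: from any configuration in $\sep$, each robot is alone on its location and does nothing (Line~2), so no transition leaves $\sep$. By Lemma~\ref{sepalg}, there exists $p = 2^{-2n^2} \in\; ]0;1]$ such that for every configuration $C$ and every scheduler $\sigma \in \Sigma_{\alg}$ we have $\prob(\exists i \in \N,\ E^C_\sigma(i) \in \sep) \geq p$; note this bound is uniform in $C$ and $\sigma$, and in particular holds for every rigid scheduler (a rigid scheduler is simply one that never uses a \emph{move and interrupt} transition, hence an element of $\Sigma_{\alg}$). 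Applying Theorem~\ref{thm:choice} with $F = \sep$ yields
\[
\forall (C,\sigma) \in \conf \times \Sigma_{\alg}, \quad \prob\big(\exists N \in \N,\ \forall i \in \N,\ i \geq N \Rightarrow E^C_\sigma(i) \in \sep\big) = 1.
\]
Thus, with probability $1$, every execution of Algorithm~\ref{scattering} eventually reaches a configuration of $\sep$ and, since $\sep$ is a safe zone, stays in $\sep$ from that point on. In every configuration of $\sep$ all robots occupy pairwise distinct locations, so the scattering requirement is met. As this holds from every initial configuration and under every (rigid) scheduler, Algorithm~\ref{scattering} solves the asynchronous rigid scattering problem in $\R^2$.
\end{proof}
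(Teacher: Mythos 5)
Your proof is correct and follows exactly the paper's route: the paper proves this theorem as a direct application of the Theorem of Choice (Theorem~\ref{thm:choice}) to Lemma~\ref{sepalg}, using the safe-zone lemma for $\sep$, which is precisely what you do (with welcome extra detail, including the correct observation that ``Algorithm~\ref{safe}'' should read Algorithm~\ref{scattering}).
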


\begin{proof}
The proof is a direct application of theorem of choice to the result of Lemma \ref{sepalg}
\end{proof}

One remaining interesting question is a necessary and sufficient condition about $\{ u(i) \}_{i \in \N}$ such that Algorithm~\ref{safe} solves the ASYNC blind rigid scattering problem (see Figure \ref{conj}). One may think that the following property is true: "Algorithm~\ref{safe} solves ASYNC blind rigid scattering if and only if $\{ u(i) \}_{i \in \N}$ spans $\Rs$". While it is necessary (Algorithm~\ref{safe} does not solve scattering when all $u(i)$ are on the same line), it is not sufficient. Indeed, if $\forall n, n  = 0 [2] \Rightarrow u(0) = (1,0) \wedge n  = 1 [2] \Rightarrow u(1) = (0,1)$, Algorithm~\ref{safe} does not solve scattering (see Figure \ref{notsufficient}), yet $\{ u(i) \}_{i \in \N}$ is a base of $\Rs$. Instead, we propose the following conjecture: 

\begin{conjecture}
The $u(i)$ permit Algorithm~\ref{safe} to solve ASYNC blind rigid scattering if and only if there exists no finite set $C \subset \Rs$ of points such that from any points of $C$ there is a way for the scheduler to choose a frame that makes every $u(i)$ coincide with points of $C$.
\end{conjecture}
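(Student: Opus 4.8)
The statement is a characterization, so the plan is to prove the two implications separately, using Theorem~\ref{thm:choice} for the direction ``no trap $\Rightarrow$ solves'' and an explicit adversary for ``trap $\Rightarrow$ does not solve''. In both directions I would work through the reduction already built into the framework: since $\sep$ is a safe zone for Algorithm~\ref{scattering}, Theorem~\ref{thm:choice} turns ``solving scattering'' into the statement that, for every number $n$ of robots, there is one positive probability $p$ with which $\sep$ is reached from every reachable configuration under every scheduler.

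For the negative direction, suppose a finite set $C=\{c_1,\dots,c_k\}$ is a trap, witnessed for each $c\in C$ by a frame $\phi_c$ (a similarity with $\phi_c(0)=c$) sending every $u(i)$ into $C$. I would take $n=k+1$ robots, all initially placed on $c_1$, and the scheduler that round-robins over the robots and, on robot $r$'s turn, (i) assigns $r$ the frame $\phi_c$ attached to its current location $c\in C$ via a \textbf{choice of frame} transition, (ii) lets $r$ run its \textbf{computation}, and (iii) if $r$ changed its destination, performs \textbf{ready to move} followed by a single \textbf{move} reaching that destination. Then every robot is always located on a point of $C$; since $n>|C|$, every configuration along the execution has a multiplicity point and hence lies outside $\sep$, so $\sep$ is reached with probability $0$. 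One checks fairness (a robot that is alone on its point performs a ``do nothing'' \textbf{computation}, which still counts as a computation transition), so Algorithm~\ref{scattering} fails to scatter $n$ robots, and the $u(i)$ do not permit it to solve the problem.

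For the positive direction, assume no trap exists, fix $n$, and aim to produce the uniform $p>0$ demanded by Theorem~\ref{thm:choice}. The plan is to replay the $(\ast)$-argument of Lemma~\ref{sepalg}: if each of the $n$ robots, the first time it computes on a multiplicity point, picks a destination avoiding the forbidden segments, then $\sep$ is reached after at most $n$ such steps, with probability bounded below independently of $C$ and $\sigma$. The no-trap hypothesis would have to supply two things that were free in Lemma~\ref{sepalg}. First, the set of forbidden destinations must remain finite: for the circular $u$ of Algorithm~\ref{scattering} this is automatic because a segment meets a circle at most twice, but for a general $u$ one needs that $\{u(i)\}$ has no infinite collinear sub-family, which line-confinement would violate. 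Second, the scheduler must not be able, with positive probability, to herd the robots forever through configurations where every admissible destination is already blocked. I would obtain both by contradiction and compactness: from an infinite bad execution extract a family of location-patterns that the scheduler keeps recycling --- finite after quotienting by the similarity group, since the scheduler sets each robot's frame --- and argue it constitutes a trap.

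The step I expect to be the crux, and the reason this remains a conjecture, is getting the notion of trap exactly right, because the literal one is too rigid: a change of frame is an injective similarity, so it cannot carry the infinitely many pairwise-distinct points $u(i)$ into any finite $C$. Taken verbatim the condition is therefore never met, which would make the theorem assert that Algorithm~\ref{scattering} scatters for \emph{every} admissible $u$, contradicting the remark (recalled just above the conjecture) that it fails when the $u(i)$ are collinear. A workable statement should instead (a) measure $|C|$ against $n$, (b) also forbid confinement to lower-dimensional sets such as a line, and (c) speak of patterns of robot \emph{locations up to the similarity group} rather than of a fixed point set, since the adversary controls frames. Pinning down that formulation --- and showing it is simultaneously an obstruction (negative direction) and the exact failure mode of the $(\ast)$-argument (positive direction), via the compactness extraction above --- is where essentially all the difficulty lies; the $(\ast)$-argument and the reductions to Theorem~\ref{thm:choice} are routine bookkeeping by comparison.
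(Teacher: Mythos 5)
The statement you set out to prove is labeled a \emph{conjecture} in the paper: the authors give no proof of it, so there is no reference argument to compare yours against, and your attempt must stand on its own. Its easy half does: given a finite trap $C$, taking $|C|+1$ robots initially stacked on a point of $C$, letting the scheduler always assign the witnessing frame $\phi_c$ before each computation and perform each (rigid) move atomically, keeps every robot on points of $C$ forever; by pigeonhole some location always carries a multiplicity, so $\sep$ is reached with probability $0$, and your fairness check (a lone robot's ``do nothing'' computation still counts as a computation transition) is correct. That direction is routine and sound.

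The other direction, however, is a plan rather than a proof, and its central step --- extracting by compactness, from an infinite bad execution, a finite family of location patterns modulo the similarity group and showing it constitutes a trap --- is exactly the open content of the conjecture; nothing in your sketch shows such an extraction can be carried out, and the scheduler's freedom to rescale the frame at every cycle (the very mechanism exploited in the proof of Theorem~\ref{no1Dscatblindrigid}) is precisely what defeats naive compactness arguments here. Your observation that, read verbatim, the trap condition is unsatisfiable whenever the $u(i)$ are pairwise distinct (a frame is an injective similarity, so it cannot send infinitely many distinct points into a finite set), and hence clashes with the paper's remark that collinear $u(i)$ defeat the algorithm, is a fair and useful criticism of the formulation; do note, though, that the paper's own counterexample (Figure~\ref{notsufficient}) already uses a sequence with only two distinct values, so the intended scope plainly includes sequences with repetitions, and the mismatch you identify only bites for sequences with infinitely many distinct values (e.g.\ infinitely many distinct collinear points), where your point stands. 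In short: the negative direction is fine, but the positive direction and the exact formulation of the trap condition remain open --- which is precisely why the paper states this as a conjecture rather than a theorem.
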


\begin{figure}[h]
    \centering
    \includegraphics[width=.3\textwidth]{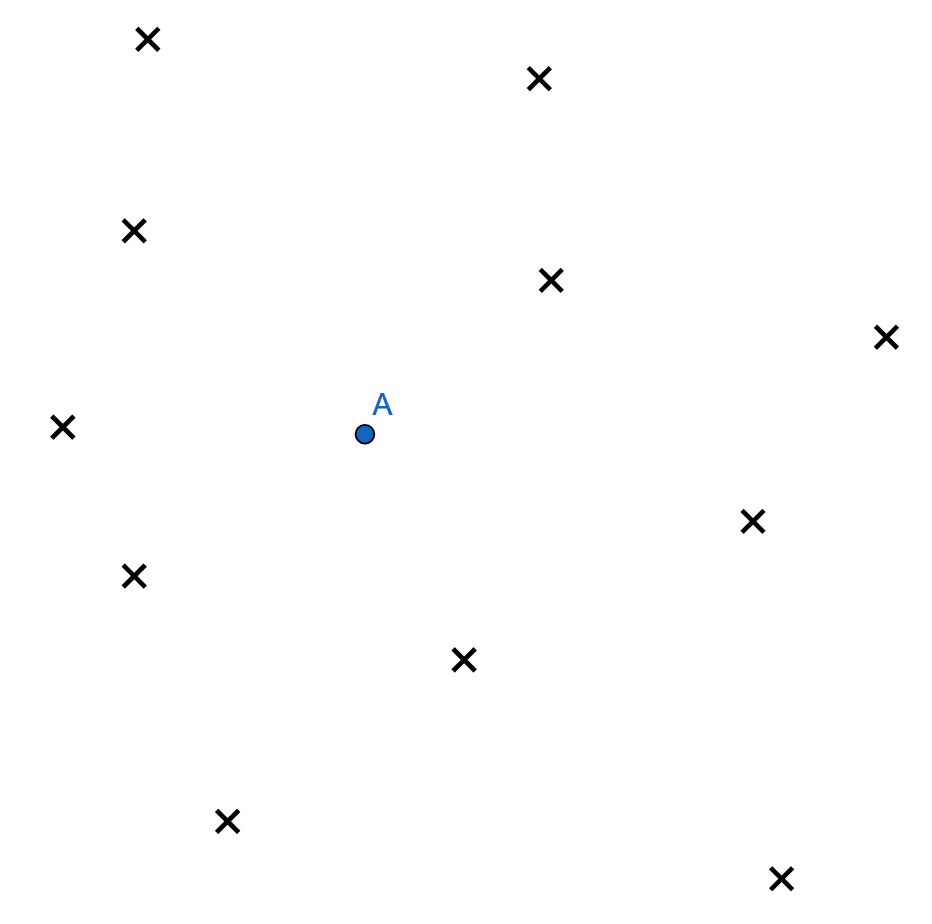}
    \caption{For which positions of the countably many crosses (that represent the possible destinations) does the algorithm solve ASYNC scattering?}
    \label{conj}
\end{figure}

\begin{figure}[h]
    \centering
    \includegraphics[width=.7\textwidth]{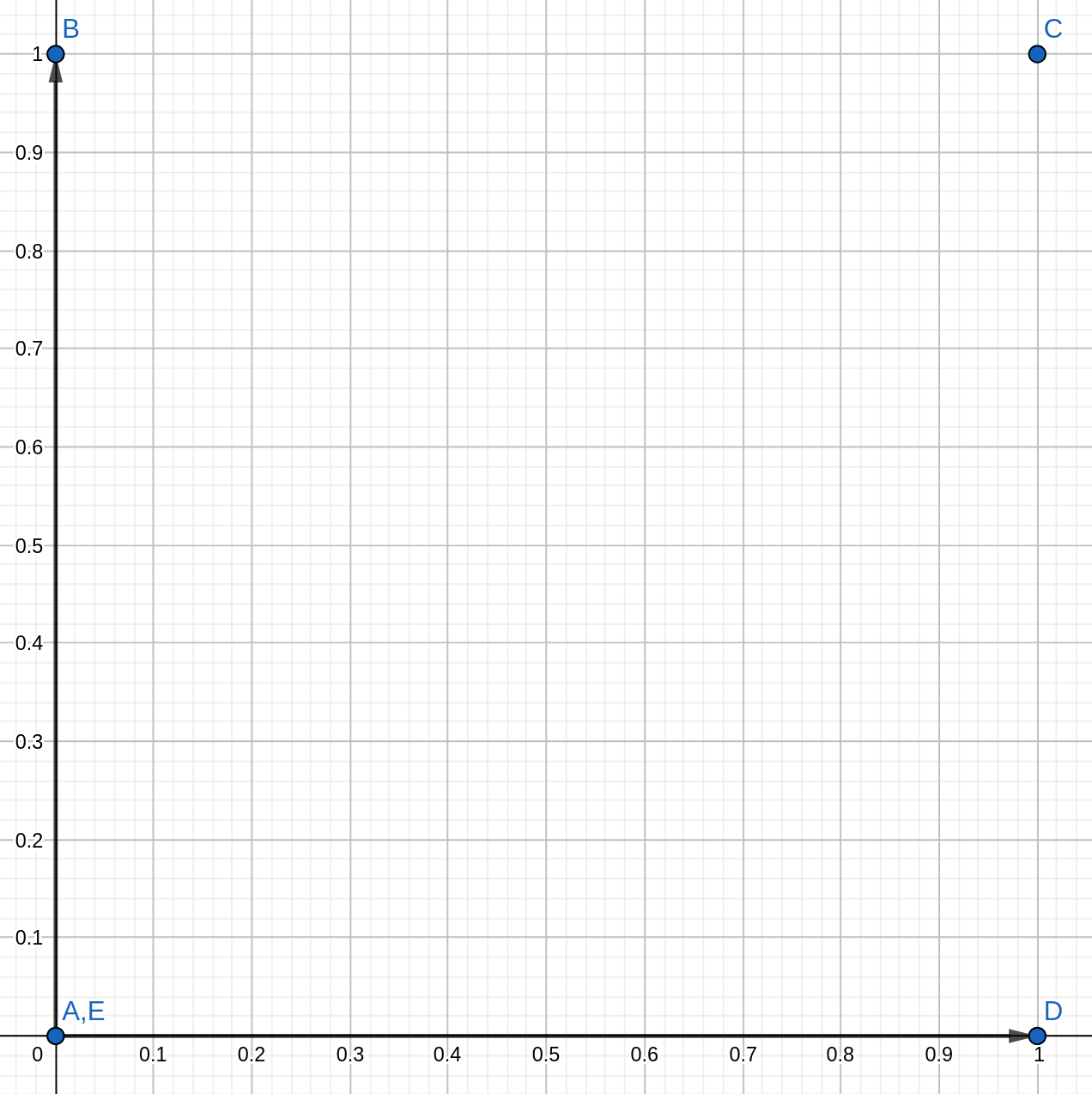}
    \caption{When $E$ moves, a symmetric situation occur, hence we have cyclic set of configurations.}
    \label{notsufficient}
\end{figure}

\clearpage

\section{Flexible Scattering in ASYNC}
\label{sec:flexible}

In this section, we consider the more difficult case of flexible (\emph{a.k.a.} non rigid) moves, where the scheduler may stop the robot before it reaches its destination (but not before the robot moves by some distance $\delta>0$). The previous blind solution (Algorithm~\ref{scattering}) may no longer be valid as the condition for not ending up in the same location now requires that robot \emph{trajectories} do not cross. So, the constant probability (from any configuration) of not reaching another robot's location is not guaranteed, and so neither is scattering. 
As a result, we devise new algorithms for this setting that are \emph{not} blind, first in the unidimensional case (Section~\ref{sec:unidimensional}), and then in the bidimensional case (Section~\ref{sec:bidimensional}).


\subsection{Flexible ASYNC scattering in $\R$}
\label{sec:unidimensional}

\begin{algorithm}
\caption{Flexible ASYNC Scattering in $\R$}\label{onescattering}
\begin{algorithmic}[1]
\If {there is no other robot at my position} 
	\State Do nothing
\Else
    \If{there is only one occupied location}
	    \State $r \gets rbit()$
		\State move by a unit distance to the right if $r=0$, to the left otherwise
	\Else
	    \State $u \gets$ smallest distance between myself and a robot not on my position
	    \State $r \gets rbit()$
	    \State move by a distance of $\frac{u}{3}$ to the right if $r=0$, to the left otherwise
    \EndIf
\EndIf
\end{algorithmic}
\end{algorithm}

The pseudocode for our solution is presented as Algorithm~\ref{onescattering}. The term "unit distance" found in Line 6 refers to the one given by the scheduler. Figure~\ref{1D} presents the two possible locations a robot located in $A$ may chose as a target destination when occupied locations are only $A$, $B$, and $C$. 


\begin{figure}[h]
 \centering
 \includegraphics[width=0.9\textwidth]{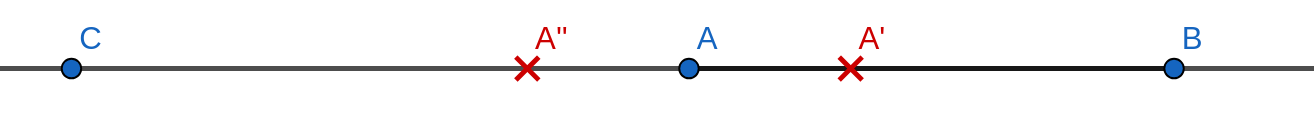}
 \caption{$A'$ and $A''$ are the two positions a robot located in $A$ can go to (assuming it's not alone on its position)}
 \label{1D}
\end{figure}

 
\begin{lemma}
$\sep$ is a safe zone for algorithm \ref{onescattering}.
\end{lemma}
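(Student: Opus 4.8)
The plan is to show that $\sep$ is a safe zone for Algorithm~\ref{onescattering}, meaning that once a configuration has all robots at distinct, immobile locations, every valid transition keeps us inside $\sep$. Recall that $\sep$ requires two things of every robot: its location equals its destination (it is immobile), and no two robots share a location. So I must argue that from a configuration in $\sep$, none of the five kinds of valid transitions (choice of frame, computation, ready to move, move, move and interrupt) can break either property.

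The key observation is the guard on Line~1: in a configuration of $\sep$ every robot is alone on its location, so when any robot is activated and reaches its \emph{computation} transition, it executes Line~2 ("Do nothing"), i.e. it sets its destination equal to its current location. Hence a computation transition from a $\sep$-configuration produces $D(k) = (p,p,0,\ldots)$, which still has $p = p$ and changes no location, so $D \in \sep$. The \emph{choice of frame} and \emph{ready to move} transitions only toggle the activation bit $r$ (and, for choice of frame, the frame parameters $s,\theta,b$); they do not alter any location or destination, so they trivially preserve membership in $\sep$. Note moreover that a \emph{ready to move} transition requires $p \neq q$ in its precondition, and a \emph{move} or \emph{move and interrupt} transition likewise requires $p \neq q$; since in a $\sep$-configuration every robot has $p = q$, none of these movement-enabling or movement transitions is ever valid from a configuration in $\sep$. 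Therefore the only transitions that can fire are choice of frame, computation (which yields "do nothing"), and — vacuously — nothing else, so the destination-equals-location property is preserved, and since no robot's location ever changes, the pairwise-distinctness property is preserved as well.

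I would structure the written proof as a short case analysis over the transition types, emphasizing first that the precondition $p \neq q$ of the three movement-related transitions is never met in $\sep$, then that choice of frame and ready to move leave locations and destinations untouched, and finally that a computation transition from $\sep$ necessarily goes through Line~2 and thus leaves $q = p$. I do not anticipate a genuine obstacle here; the only subtlety worth stating explicitly is the one in the previous (Algorithm~\ref{scattering}) lemma, namely that the safety relies entirely on the fact that a robot alone on its location does nothing, so I would make sure to invoke the Line~1 guard explicitly and note that the weak local multiplicity detection is what lets the robot verify it is alone. A one-line proof in the style of the analogous lemma for Algorithm~\ref{scattering} ("Starting from a configuration in $\sep$, a robot does nothing when there is no other robot at its position (Line~2), and no movement transition is enabled since every robot has its destination equal to its location") would suffice, but I would expand it slightly to cover the frame and activation transitions for completeness.
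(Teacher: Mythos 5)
Your proposal is correct and matches the paper's argument, which is a one-line observation that from $\sep$ every robot is alone and therefore "does nothing" (Line~2); your expansion into a case analysis over the transition types (noting that the movement-related transitions require $p \neq q$ and so are never enabled) just spells out what the paper leaves implicit. No gap, same approach.
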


\begin{proof}
Starting from a configuration in $\sep$, a robot does nothing when there is no other robot at its position (Line 2). 
\end{proof} 
 
\begin{theorem}
Algorithm \ref{onescattering} solves the ASYNC flexible scattering problem in $\R$.
\label{thm:RFlexScattering}
\end{theorem}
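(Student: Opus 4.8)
The plan is to invoke the Theorem of Choice (Theorem~\ref{thm:choice}) with the safe zone $F = \sep$, exactly as in the rigid case (Theorem~\ref{theo2Drigid}). Since the previous lemma already establishes that $\sep$ is a safe zone for Algorithm~\ref{onescattering}, it remains only to verify the hypothesis of Lemma~\ref{safe}: there is a uniform $p > 0$ such that, from \emph{any} configuration $C$ and under \emph{any} scheduler $\sigma$, an execution reaches $\sep$ with probability at least $p$. I would construct, step by step, a particular pattern of random bits (the outcomes of the \emph{rbit()} calls) for which $\sep$ is reached, and lower-bound its probability by a quantity depending only on $n$.

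First I would set up notation as in Lemma~\ref{sepalg}: track the occupied locations, and note that the key invariant we want to maintain is that robot \emph{trajectories} (the segments actually traversed, including interrupted ones) do not overlap, since in the flexible model a robot may be stopped anywhere in $]p;q]$ after moving at least $\delta$. The central combinatorial observation for $\R$ is the choice of step size: a robot that is not alone and sees another occupied location at distance $u$ moves by $u/3$, so its entire trajectory stays within distance $u/3$ of its start, strictly less than half the gap to the nearest other occupied point. Consequently, when a robot leaves a multiplicity point, the sign bit determines on which side it ends up, and for at least one of the two sign choices (say, the one pointing ``away'' from the other robots, or in the two-location case simply toward the empty side) the moving robot's trajectory cannot meet, cross, or be crossed by any other robot's trajectory or destination — regardless of where the scheduler interrupts it. So each \emph{rbit()} has probability at least $1/2$ of producing a ``good'' move that strictly decreases the number of robots sharing the maximally-populated multiplicity point without creating any new collision among trajectories.

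Then I would argue that after at most $n$ such good moves (one per robot, say, using the same ``first robot to move twice'' style of argument as in Lemma~\ref{sepalg}: a robot that has left a multiplicity point along a collision-free trajectory can never be re-joined by another robot, so it never computes again) the configuration lies in $\sep$. Hence the probability of reaching $\sep$ from $C$ under $\sigma$ is at least $2^{-n}$, a bound independent of $C$ and $\sigma$. Applying Theorem~\ref{thm:choice} gives that $\sep$ is reached — and stays reached — with probability $1$, which is exactly scattering. The main obstacle is the bookkeeping in the two-location case (Lines 4--6): when there are only two occupied locations and the scheduler interleaves partial moves of robots from both stacks, one must check that the unit-distance moves chosen by the ``good'' sign pattern still cannot produce a new common location; the careful point is that once the two stacks have non-equal populations or once one robot has committed to a trajectory on one side, the adversary's freedom to interrupt does not let it recreate a multiplicity point, so the $u/3$ safety-margin reasoning still applies after the first good move reduces us to the general case.
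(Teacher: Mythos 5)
Your plan transplants the one-shot argument of Lemma~\ref{sepalg} (exhibit a pattern of at most $n$ good coin flips, each good with probability $\tfrac12$, reaching $\sep$ directly, hence a uniform bound $2^{-n}$, then apply Theorem~\ref{thm:choice} once). The paper deliberately does not argue this way for the flexible case, and the two claims your bound rests on are in fact false under flexible ASYNC. First, the claim that at least one of the two sign choices yields a trajectory that ``cannot meet, cross, or be crossed by any other robot's trajectory or destination, regardless of interrupts'' does not hold: a robot executing Algorithm~\ref{onescattering} has only \emph{two} candidate destinations (displacement $\pm u/3$, or $\pm$ one scheduler-chosen unit in the single-occupied-location branch), while the other robots may carry stale pending destinations computed on outdated views --- for instance arbitrarily long segments produced by Line 6 with a huge unit distance, or simply present in the arbitrary configuration $C$ over which Lemma~\ref{safe} quantifies --- and such pending segments can cover \emph{both} sides of the multiplicity point. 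In the rigid bidimensional case the analogous step works because there are countably many candidate destinations and only $2n$ bad \emph{points} to avoid; here the bad sets are intervals and there are only two candidates, so both can be blocked, after which the scheduler interrupts your mover partway and parks the stale mover exactly on top of it. Second, and as a consequence, your key bookkeeping claim ``a robot that has left a multiplicity point along a collision-free trajectory can never be re-joined, so it never computes again'' fails: a robot whose destination was computed before the separated robot stopped can later sweep over its position and be interrupted there, recreating a multiplicity point; thus the number of computations before $\sep$ is not bounded by $n$ and no uniform $2^{-n}$ follows. This is precisely the failure mode the paper points out at the beginning of Section~\ref{sec:flexible}.

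The paper's proof takes a different, staged route that circumvents both problems: fix a left--right orientation, use the strategy ``a rightmost robot goes right unless another rightmost robot is already going right'', observe that this event can occur at most $2n$ times (a rightmost robot that went left can never become rightmost again), so the strategy has probability at least $2^{-2n}$ independently of $C$ and $\sigma$; then prove that under this strategy a single rightmost robot eventually is and remains rightmost and immobile, and that the set $F_1$ of configurations where that robot is still and no other robot's pending destination crosses it is itself a safe zone. Theorem~\ref{thm:choice} is then applied \emph{repeatedly}, once per intermediate safe zone $F_1 \subset F_2 \subset \cdots \subset F_n \subset \sep$, peeling off one robot per stage rather than separating all robots within one bounded burst of coin flips. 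If you want to salvage your direct approach you would have to incorporate all pending trajectories into the target event, but then a uniform positive probability of reaching that target within a bounded number of computations is exactly what the adversary's interrupts destroy --- which is why the induction on nested safe zones is needed.
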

 
\begin{proof}
Our strategy is to increase by at least one the number of isolated (and hence immobile) robots at each step of the proof. 
We first fix a left-right orientation (unknown to the robots), and number the robots from $1$ to $n$. Let $\sigma$ be a scheduler and $C$ a configuration.
For the first step, we devise a strategy to put one robot to the right of every other robot (and it remains so forever): if a robot located at the rightmost position (called "rightmost robot") is requested to make a choice, then if no other rightmost robot already goes to the right, go to the right; otherwise, go to the left. We refer to this event as  ($\ast$). We now assume that event ($\ast$) always happens when it can, \emph{i.e.} our strategy occurs.
\begin{enumerate}
    \item This strategy occurs with probability at least $2^{-2n}$. Indeed, if a rightmost robot decides to go left once, it is never considered again as it cannot become a rightmost robot again.
    In more details, Algorithm~\ref{onescattering} requires that for a robot $A$ to cross a robot $B$ to the right, $B$ should go to the left. With our strategy, at least one rightmost robot goes right or stays still. 
    It follows that the number of times we need to repeat the strategy is at most $2n$, as at most $n$ rightmost robots may be requested to go to the left (and if a rightmost robot goes to the left it must be that a rightmost robot has chosen to go to the right).
    Then, we need each robot on the rightmost position stack to make the right choice between left and right, each one has a probability $\frac{1}{2}$ of making this right choice. So, the strategy is followed with probability at least $2^{-2n}$.
    \item There is a time after which no robot wants to go further to the right of or at the same position as the rightmost robot, and the rightmost robot does not want to move to the left. Said differently there is a time after which the rightmost robot remains the rightmost robot forever (we call this property $P$). This is entailed by two facts: \emph{(i)}, ($\ast$) can occur at most $2n$ times (see above) and if ($\ast$) occurs finitely many times, then property $P$ is satisfied. 
     It is easy to be convinced of this claim if one notices that a non-rightmost robot cannot become a rightmost robot via a computation (unless the rightmost robots all move left). Next, because we follow the strategy, the rightmost robots cannot go left (except maybe at the beginning).\\
     We now give the technical proof : 
    Assuming the robots follow our strategy, there is a time after which no one wants to go further to the right of or at the same position as the rightmost robot and the rightmost robot does not want to move to the left, as a result the rightmost robot will remain the rightmost robot forever (we call this property P).\\
    We will prove it by induction on the number of robots who may become the rightmost robot at least once.\\
    Firstly it is easy to be convinced there is a time $t_{right}$ after which the position of the rightmost robot can only increase.\\
    Secondly if at a time $t > t_{right}$ a robot is still (its position and its destination is the same) and is not one of the rightmost robots, it can never become one of the rightmost robot. We call this property S.\\
    Let us now prove property P : by contradiction, let us assume that "who the rightmost robot" is changes an infinite amount of time, let us place ourselves after ($\ast$) does not happen anymore. Since we assume ($\ast$) happens whenever it can happen then it means that there is no situation, after some time $t > t_{right}$, where one of the rightmost robot is asked to compute.\\
    We say that a robot is in the race for rightmost status if there is a scheduler such that following this scheduler and our strategy there is a non-zero probability that this robot becomes, at least for some time, the rightmost robot. In other words a robot is in the race for rightmost status if there is a way for this robot to become one of the rightmost robots at some point.\\
    We place ourselves at time t, we number from 1 to r the robots in the race for rightmost status. There is a time t' where the rightmost robot is crossed by the robot k (by assumption). There is a time t'' at which k stops (either because the scheduler interrupted him or because k reached its destination) at this point k's position and k's destination are the same. If k never moves again it is removed from the race. If k moves once again then it means there is a time $t_3$ where another robot q which goes on top of it and k performs a computation. If at time $t_3$ k is not the rightmost robot then it is removed from the race (by property S). But k is not the rightmost robot at time $t_3$ because $t_3 >t$, and by definition of t the rightmost robot cannot compute after t.
    Therefore k is removed from the race. By induction this proves claim P (P is equivalent to having only one robot in the race for rightmost status).
    \item After long enough the rightmost robot does not want to move : by the last point there is a time at which there the rightmost robot does not move or moves to the right by fairness of the scheduler it will eventually reach its destination or be interrupted and never be reached by another robot (by the previous point)
\end{enumerate}
The set of configurations $F_1$ were a single rightmost robot does not want to move, and every other robot does not want to cross it is safe (by points 2 and 3 above). Moreover from any starting configuration $C$ and for any scheduler $\sigma$, there is a probability at least $2^{-2n}$ to reach this set. By theorem of choice, $F_1$ is thus reached with probability $1$.

For the next step of the strategy, we assume that there exists some integer $k\geq 1$ such that the following property holds: "the set of configuration $F_k$ where the $k$ rightmost robots are lonely and do not want to move, and every other robot does not want to cross the leftmost of those $k$ rightmost robots is safe". If a configuration is in $F_k$, We call $R_k$ the set of robots minus the $k$ single rightmost robots. Then, we use a strategy similar to the base case of the induction: if you are a rightmost robot of $R_k$ and you are required to make a choice, if no other robots at your current location is already going to the right, then go to the right, otherwise, go to the left. By a similar set of arguments as for the base case, after long enough, a configuration in $F_{k+1}$ is reached, and this configuration is safe. 

Step by step, we conclude that we eventually reach a configuration in $F_n$, and this configuration is safe. Obviously, $F_n$ satisfies $\sep$.
\end{proof}

\clearpage

\subsection{Flexible ASYNC Scattering in $\R^2$}
\label{sec:bidimensional}


The pseudo-code for algorithm scattering is given as Algorithm~\ref{twoscattering}. The smallest enclosing polygon (line 10) is also known as the \emph{convex hull}. 
Cardinal directions (Line 7) are given by the scheduler, while the vector $\overrightarrow{e_x}$ (Line 14) denotes the first axis of the frame given by the scheduler to the robot. 

When we write "move alongside $\theta$", where $\theta$ is an angle, we mean that the robot should move in the direction given by the vector $\overrightarrow{v}$, of size one and where $angle( \overrightarrow{e_x}, \overrightarrow{v}) = \theta$. An illustration of the algorithm is given in Figure~\ref{2D}. 


\begin{algorithm}[htbp]
\caption{Flexible ASYNC Scattering in $\R^2$ for robot $i$}\label{twoscattering}
\begin{algorithmic}[1]
\If {there is no other robot on my position} 
	\State Do nothing
\Else
    \If{there is only one occupied location}
	    \State $r_1 \gets rbit()$, $r_2 \gets rbit()$
		\State move by a unit distance to the East if $r_1=0$ and $r_2=0$, to the North if $r_1=0$ and $r_2=1$, to the South if  $r_1=1$ and $r_2=1$, to the West otherwise.
	\Else
	    \State Let $A$ be my current position
	    \State Let $P$ be the smallest enclosing polygon of all robots
	    \State Let $G$ be the barycenter of $P$
	    \State For every robot (including myself) $r$, compute $H_r$, the homothety of $P$ that passes through $r$ with center $G$
	    \State $u \gets$ smallest distance between $H_i$ and another polygon or $G$
	    \State $\theta \gets angle( \overrightarrow{e_x}, \overrightarrow{GA}) $, $u \gets u/3$ 
	    \State $r_1 \gets rbit()$, $r_2 \gets rbit()$
	    \State Move by distance $u$, alongside $\theta + \frac{\pi}{10}$ if $r_1= 0$ and $r_2 = 0$, alongside $\theta - \frac{\pi}{10}$ if $r_1= 0$ and $r_2 = 1$, alongside $-\theta + \frac{\pi}{10}$ if $r_1= 1$ and $r_2 = 0$, alongside $-\theta - \frac{\pi}{10}$ if $r_1= 1$ and $r_2 = 1$.
    \EndIf
\EndIf
\end{algorithmic}
\end{algorithm}


\begin{figure}[htbp]
\centering
 \includegraphics[width=.85\textwidth]{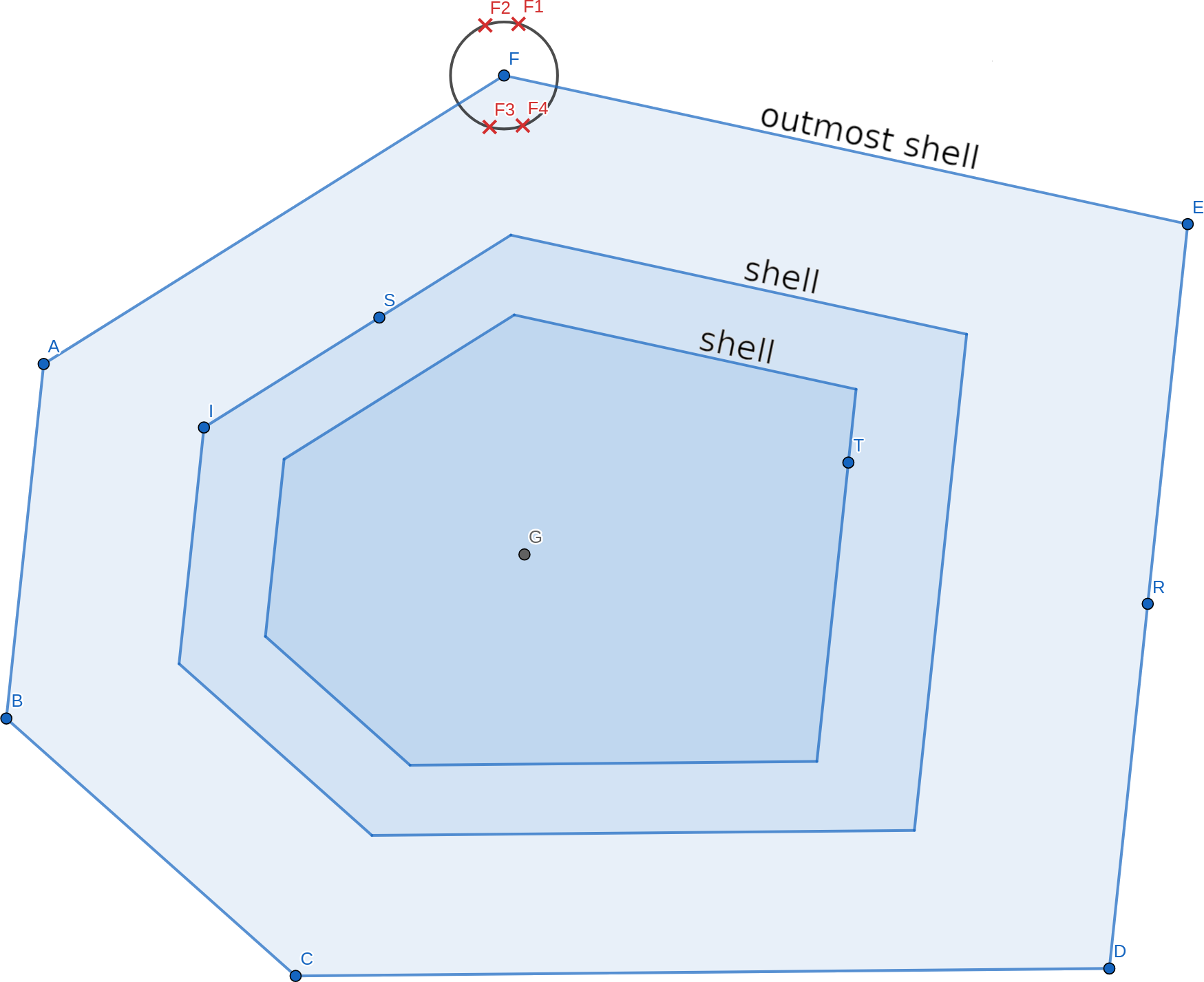}
 \caption{After observing the current configuration, robot $F$ (assumed not to be alone) computes the smallest enclosing polygon and its homotheties, and selects the four possible destinations $F1$, $F2$, $F3$, and $F4$.}
 \label{2D}
\end{figure}



\begin{lemma}
$\sep$ is a safe zone for Algorithm~\ref{twoscattering}.
\end{lemma}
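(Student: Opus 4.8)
The plan is to show that $\sep$ is a safe zone for Algorithm~\ref{twoscattering}, i.e.\ that starting from any configuration in $\sep$, every valid transition keeps us in $\sep$. Recall that a configuration is in $\sep$ exactly when all robots occupy pairwise distinct positions and every robot is immobile (location equals destination). The key observation is structural: in such a configuration, no robot shares its location with another robot, so the guard on Line~1 of Algorithm~\ref{twoscattering} (``there is another robot on my position'') is false for every robot. Hence the only computation any robot can perform is the ``Do nothing'' branch (Line~2), which sets the destination equal to the current location.

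First I would enumerate the five kinds of valid transitions from the execution model and check each one preserves membership in $\sep$. A \textbf{choice of frame} transition changes only the activation bit and the frame parameters $(\theta,b)$ of one robot; it does not touch any location or destination, so the separation and immobility conditions are untouched. A \textbf{computation} transition for robot $k$ replaces its destination $q$ by the output of $\alg$; since robot $k$ is alone on its position (we are in $\sep$), $\alg$ executes the ``Do nothing'' branch and returns the current location, so $D(k)(2)=D(k)(1)$ and $D$ is still in $\sep$. A \textbf{ready to move} transition is only enabled when $p\neq q$ for the robot in question, but in $\sep$ every robot has $p=q$, so this transition cannot fire; likewise the \textbf{move} and \textbf{move and interrupt} transitions require $p\neq q$ and are therefore also disabled. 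Thus from a configuration in $\sep$ the only transitions available are choice of frame and (trivial) computation, and both land back in $\sep$.

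The remaining concern — and the only place any genuine argument is needed — is that the ``Do nothing'' guard really does evaluate correctly in the asynchronous model: a robot in $\sep$ must indeed see no other robot at its position. This is immediate from the weak local multiplicity detection assumption, since by definition it returns \emph{false} precisely when the robot's location is occupied by exactly one robot, which is the case for every robot in a configuration of $\sep$. Note that the argument does not rely on the (blind) vision model at all, only on the multiplicity bit. I do not expect any serious obstacle here; this lemma is the direct analogue of the corresponding lemmas for Algorithms~\ref{scattering} and~\ref{onescattering}, and the proof is essentially a one-line appeal to Line~2 together with the observation that the move-related transitions are syntactically disabled when all destinations equal the corresponding locations. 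The write-up can therefore be kept to a sentence or two, mirroring the earlier safe-zone lemmas.
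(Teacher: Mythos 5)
Your proposal is correct and takes the same route as the paper: the paper's proof is exactly the one-sentence observation you end with, namely that from a configuration in $\sep$ every robot is alone at its position and thus executes the ``Do nothing'' branch of Line~2, so no robot ever acquires a destination different from its location. Your more detailed case analysis of the five transition types is a fuller write-up of the same argument, not a different approach.
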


\begin{proof}
Starting from a configuration in $\sep$, a robot does nothing when there is no other robot at its position (Line 2).
\end{proof}

\begin{theorem}
\label{thm:R2FlexScattering}
Algorithm \ref{twoscattering} solves the flexible ASYNC scattering problem in $\R^2$.
\end{theorem}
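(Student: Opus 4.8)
The plan is to mirror the proof of Theorem~\ref{thm:RFlexScattering}, replacing the linear notion of ``rightmost robot'' by a two-dimensional one read off the convex hull. I would build a nested family of safe zones $\conf = F_0 \supseteq F_1 \supseteq \cdots \supseteq F_n$, where $F_k$ is the set of configurations in which at least $k$ robots are alone on their location, are immobile, and can moreover never be reached again by another robot's trajectory under any scheduler. Then $F_n \subseteq \sep$, so by invoking the Theorem of Choice $n$ times it is enough to show that each $F_k$ is a safe zone and that, from any configuration, $F_{k+1}$ is reached from $F_k$ with a probability bounded below by a positive constant depending only on $n$.

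Before the induction I would record the geometry that drives the algorithm. First, among the four offered displacement directions in the ``else'' branch, two --- the ones obtained from $\overrightarrow{GA}$ by rotation by $+\pi/10$ and by $-\pi/10$, which form the pair $\{\theta+\tfrac{\pi}{10},\ \theta-\tfrac{\pi}{10}\}$ in the robot's own frame and are hence available whatever the robot's unknown orientation and chirality --- both make an angle less than $\tfrac{\pi}{2}$ with $\overrightarrow{GA}$, so a robot picking one of them (probability $\tfrac14$ each) moves strictly away from the current barycenter $G$; used on a vertex of the current convex hull $P$, such a move tends to push the robot strictly outside $P$. Second, scaling the displacement by $u/3$ confines the robot to a thin neighbourhood of its own homothety shell $H_i$: since $u$ is at most the distance from $H_i$ to any other shell $H_j$ and to $G$, two robots on distinct shells stay in disjoint neighbourhoods of their shells, so their trajectories never cross, while collisions within a single shell are precisely the collisions resolved by the $4$-way random choice. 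Third, each of the four directions is drawn with probability $\tfrac14$ and the robot's frame is fixed, so any prescribed ``useful'' choice has probability at least $\tfrac14$ per computation, which over the at most $2n$ relevant computations yields the required lower bound, uniform in the scheduler and the starting configuration.

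The base case then runs as follows. If there is a single occupied location, the cardinal-direction branch splits the stack among at most four points --- the exact analogue of the one-dimensional ``only one occupied location'' case --- so assume at least two occupied locations and look at the outermost shell, i.e.\ the vertices of $P$. Whenever a multiplicity point on the hull is activated I let the robots there use the tie-break ``take a tilted outward direction not already chosen by another robot sharing my location''; with probability $\geq\tfrac14$ each they move to pairwise distinct points outside $P$, and by the confinement fact their trajectories are pairwise disjoint and disjoint from those of all robots on inner shells. By fairness one of them eventually halts at a location that is, and stays, a strict vertex of every later hull, hence can never be reached again; this robot is settled and protected, so $F_1$ has been reached, and the closure argument that $F_1$ is a safe zone is as in the one-dimensional proof (a protected robot stays protected, a settled robot stays settled). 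The inductive step is the same once the $k$ robots already in place are ``frozen'': they are still observed, hence still contribute to $P$ and $G$, but being protected they do not disturb the confinement estimates for the remaining $n-k$ robots, which run the very same algorithm and thus produce one more settled protected robot with probability bounded below by a constant depending only on $n$. Iterating $n$ times and appealing to the Theorem of Choice gives $\prob(\exists N,\ \forall i\geq N,\ E^{C}_{\sigma}(i)\in\sep)=1$.

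I expect the crux to be turning ``confinement'' and ``the outermost shell can only grow'' into airtight statements, because $P$, $G$ and all homotheties $H_r$ move continuously while the robots move and the moves of distinct robots are interleaved arbitrarily by the scheduler: one must control trajectory separation along the entire motion, not just at the instants where destinations are computed, and in particular rule out that the drift of the barycenter $G$ ever lets an inner-shell robot overtake an outer one. A secondary but genuine source of case analysis is the family of degenerate configurations --- all robots collinear (so $P$ degenerates to a segment and the shells to sub-segments), all robots on $\partial P$ (so every $H_r$ coincides with $P$ and $u$ is governed by the distance to $G$), or $G$ falling on an occupied location --- where the homothety picture partly breaks down and the estimates must be re-derived by hand; there the $\pi/10$ tilt is exactly what forces the configuration out of the collinear degeneracy after a single move.
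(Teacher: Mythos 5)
Your skeleton (mirror the 1D proof, peel off settled robots through nested safe zones, get a per-stage probability of order $\tfrac{1}{4}^{O(n)}$, and invoke the Theorem of Choice) is the same as the paper's, but the claims you rest it on are not the paper's, and the two you explicitly defer --- (i) robots whose computed homothety shells differ have trajectories that never cross, and (ii) a robot that halts outside the current hull remains a strict vertex of every later hull and ``can never be reached again by another robot's trajectory under any scheduler'' --- are precisely the hard content of the 2D ASYNC case, not bookkeeping. They do not follow from the $u/3$ bound as you state them: the hull $P$, the barycenter $G$ and all shells $H_r$ are recomputed by each robot from the snapshot it happens to take, which in ASYNC may be stale (other robots have pending moves or are in mid-motion), so two robots' confinement estimates refer to two different shell decompositions, and over repeated computations the shells and $G$ drift; disjointness of entire trajectories, and safety of your $F_k$ (``never reachable again''), is exactly what would need to be proved. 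The paper's proof deliberately avoids needing any global trajectory-disjointness: for a fixed direction $\theta$ it orders robots by the projection on $e^{i\theta}$, replays the 1D ``rightmost'' strategy (at a multiplicity point one robot goes $\theta$-further --- possible because at least one of the four offered directions increases the $\theta$-projection --- and its cohabitants go $\theta$-closer), proves that the set of $\theta$-farthest robots eventually stabilizes (the analogues of properties $P$ and $S$; $S$ uses only that a single move of length $u/3$ cannot carry a still robot onto or past its enclosing shell), then argues that applying this over all $\theta$ requires action only finitely often and yields a stable outer shell, and finally recurses on the robots strictly inside that shell, applying the Theorem of Choice stage by stage. In short, the paper's protection of a settled robot is directional ($\theta$-farthest status), not the much stronger ``no trajectory ever passes through its location'' that your $F_k$ demands.

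A secondary flaw: your base-case tie-break ``take a tilted outward direction not already chosen by another robot sharing my location'' cannot always be obeyed. The scheduler chooses the frames, and by choosing them adversarially it can ensure that only the two directions $\theta+\tfrac{\pi}{10}$ and $\theta-\tfrac{\pi}{10}$ point away from $G$ (the other two, built from $-\theta$, can be made to point back toward $G$); with three or more robots stacked on a hull vertex they cannot all pick pairwise distinct outward destinations in one computation each. The paper's strategy sidesteps this by sending exactly one robot of the stack $\theta$-further and all its cohabitants $\theta$-closer, splitting multiplicity points gradually rather than in a single simultaneous fan-out, and bounding the number of such events by $2n$ to keep the probability uniform in $C$ and $\sigma$.
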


\begin{proof}
The proof for the $\R^2$ case follows the same pattern as the one for $\R$. However, it is now more ambiguous to talk about "farthest" robots. We use the smallest enclosing polygon to define the set of farthest robots, but instead of two directions (left and right) in the case of $\R$, we now have multiple angles to distinguish between farthest robots. In more details,let us denote by $\overrightarrow{e_x}$ the first coordinate vector of $\Rs$, let $C$ be a finite set of points in $\Rs$, let $\theta$ be a number in $\angles$, $c \in C$ is one of the $\theta$-furthest points if $ \langle c , e^{i \theta} \rangle = \underset{c'\in C}{\mathrm{max}} \langle c' , e^{i \theta} \rangle$ where $e^{i \theta}$ is the vector of length 1 and such that  $ \langle \overrightarrow{e_x}  , e^{i \theta} \rangle = \theta$ (see figure \ref{angle})

\begin{figure}
 \centering
 \includegraphics[width=.8\textwidth]{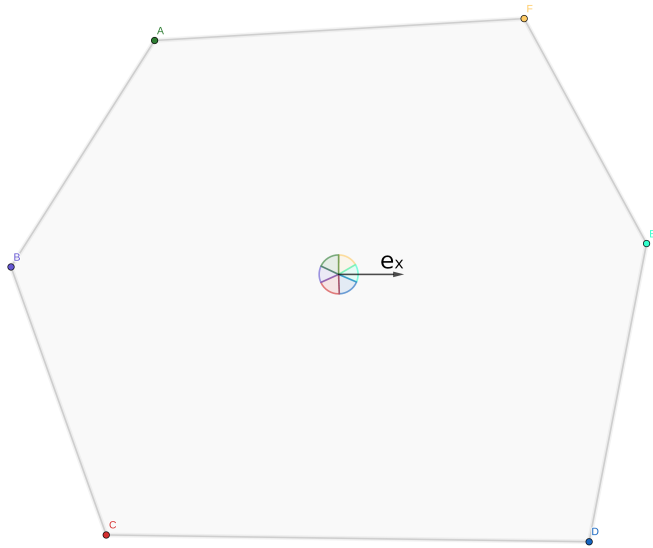}

 \caption{The displayed robots are the $\theta$-farthest robot for each $\theta$ that falls in a section of their color. $E$ is the $0^{\circ}$-farthest robot as well the $1^{\circ}$-farthest robot, $D$ is the $-45^{\circ}$-farthest robot, etc.}
 \label{angle}
\end{figure}

We number robots from 1 to $n$. Let $\sigma$ be a scheduler and $C$ a configuration. 

For the first step, we choose an angle $\theta$ arbitrarily, and we devise a strategy to put a robot the farthest along $\theta$ (called $\theta$-farthest). 
The strategy to follow is: if a robot $r$ is one of the $\theta$-furthest robots and is required to make a choice, if no other robot on the same location is already going $\theta$-further, $r$ pick a direction that makes it go $\theta$-further, otherwise $r$ picks a direction that does the opposite (that is, makes $r$ go $\theta$-closer). This event is called ($\ast$). Note that it is always possible to comply with the instructions because among the 4  directions robot $r$ can go to, at least one is making $r$ go $\theta$-further.
\begin{itemize}
    \item this strategy happens with a probability at least $\frac{1}{4}^{-2n}$. Indeed, if at some point $r$ decides to go $\theta$-closer, $r$ never applies the strategy again because $r$ cannot become a $\theta$-farthest robot again (by construction of the algorithm and the strategy). Therefore the event ($\ast$) happens at most $2n$ times, since there are 4 equally likely destinations and that at least one goes $\theta$-farther, we get $\frac{1}{4}^{-2n}$.
    \item After long enough, no robot wants to go $\theta$-further or equally $\theta$-far as the $\theta$-furthest robots (there can eventually be multiple $\theta$ farthest robots) : the proof is similar to the uni-dimensional case.
    Assuming the robots follow our strategy there is a time after which the set of $\theta$-farthest robots never change anymore. We call this property P.\\
    Firstly is it easy to be convinced that there is a time $t_{\theta}$ after which the value of the projection of the position of the $\theta$-farthest robots over the vector $e^{i\theta}$ only increases.\\
    Secondly if at a time $t>t_{\theta}$ a robot is still (its position and its destination is the same) and is not one of the $\theta$-farthest robots, it can never become one of the $\theta$-farthest robots. This is because, in the algorithm, the robot can move a distance which is strictly inferior to the distance to the closest enclosing polygon. We call this property S.\\
    Let us now prove property P : by contradiction, let us assume that the  set of $\theta$-farthest robots changes an infinite amount of time, let us place ourselves after ($\ast$) does not happen anymore. Since we assume ($\ast$) happens whenever it can happen then it means that there is no situation, after some time t > $t_{\theta}$, where one of the $\theta$-farthest robot is asked to compute.
    We say that a robot is in the race for $\theta$-farthest status if there is a scheduler such that following this scheduler and our strategy there is a non-zero probability that this robot becomes, at least for some time, the $\theta$-farthest robot.\\
    We place ourselves at time t, we number from 1 to r the robots in the race for $\theta$-farthest  status. Since the set of $\theta$-farthest robot always change by assumption, there is a time t' where robot k becomes a $\theta$-farthest robot (and it was not just before that). There is a time t'' at which k stops (either because the scheduler interrupted him or because k reached its destination) at this point k's position and k's destination are the same. 
    If k never moves again there are two possibilities 
    \begin{itemize}
        \item there is a point after which it is not a part of the set $\theta$-farthest robots and it is removed from the race (since it will never move) 
        \item or it will always be a part of the set of $\theta$-farthest robots.
    \end{itemize}
    
    If k moves once again then it means there is a time $t_3$ where another robot q which goes on top of it and k performs a computation.  If at time $t_3$ k is not the $\theta$-farthest robot then it is removed from the race (by property S). But k cannot be $\theta$-farthest robot at time $t_3$ because $t_3>t$ (and no robot $\theta$-farthest robot can compute after time t).
    Therefore robot k is removed from the race. By induction this proves claim P because each time we either discard a robot from ever being in the set of $\theta$-farthest robots or we put a robot in this set forever, at some point the set of $\theta$-farthest robot will never change anymore (which is property P).
    \item After long enough, the $\theta$-furthest robots do not want to move anymore. Once the situation described in the previous bullet case is reached, we just wait for the $ \theta$ furthest robots to reach their destination (or be stopped by the scheduler, for what matters).
\end{itemize}
 After events listed in the above point have happened, the set of $\theta$-farthest robots remains the same forever. \\*
 
Next steps of the strategy :

A polygon (or shell) is said stable if all robots on the border of the polygon are standing still and no one inside this polygon wants to go outside or on the border of the polygon.
We  successively apply the strategy we used for a specific $\theta$ to all $\theta ' \in \angles$. For a given $\theta$ it can be we are already at a point where the set of $\theta$-farthest robots will always stay the same and we have nothing to do. The cases where we do have something to do occur at most n times. Once the strategy has been applied for all $\theta ' \in \angles$ the outer shell (or convex hull) of the robots is stable. Just like in the uni-dimensional case we then apply the same strategy to the robots strictly inside the stable shell. We simply change in the reasoning "$\theta$-furthest"  by "$\theta$-furthest among the robots that are not part of a stable shell".\\

At the end, by application of the theorem of choice we reach $\sep$.


\end{proof}

Observe that Algorithm~\ref{twoscattering} very easily extends to higher dimension.



\clearpage

\section{Complexity}
\label{sec:complexity}

The notion of complexity can be studied in two variants: the expected number of coin tosses until scattering, and the expected number of rounds until scattering. A round is a smallest fragment of an execution during which all robots execute each at least one Look-Compute-Move cycle.

\begin{theorem}
Algorithm \ref{scattering} achieves scattering of $n$ robots in expected $O(2^{2n^2})$ rounds.
\end{theorem}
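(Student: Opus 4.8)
The plan is to lift the probabilistic estimate already established in Lemma~\ref{sepalg} from the level of individual transitions to the level of rounds, and then conclude by the same geometric‑tail computation used in the proof of Lemma~\ref{safe}. The key structural remark is that, within one round, every robot completes at least one Look‑Compute‑Move cycle, and since moves are rigid, each robot actually reaches its computed destination during that cycle. Fix an arbitrary configuration $C$ and scheduler $\sigma$ for Algorithm~\ref{scattering}, and look at the first round. Let $G$ be the event that every \emph{computation} transition performed during this round by a robot that is not alone on its location yields a destination satisfying the $(\ast)$ property of Lemma~\ref{sepalg}: the chosen point lies on no other robot's current segment, and no other robot's current destination lies on the segment the robot is about to traverse.

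First I would bound $\prob(G)$ from below. Revealing the random bits of the successive computations‑while‑not‑alone one at a time, each such computation satisfies $(\ast)$, conditionally on the past, with probability at least $2^{-2n}$ (at most $2n$ of the points $u(i)$ are forbidden, so the robot only needs to land beyond index $2n$, exactly as in Lemma~\ref{sepalg}); moreover, once a robot has performed a $(\ast)$‑computation it becomes — and, as long as $G$ keeps holding, remains — isolated, hence never performs another computation‑while‑not‑alone. Consequently, under $G$ there are at most $n$ such computations, so $\prob(G)\geq 2^{-2n^2}$, with a bound independent of $C$ and $\sigma$. Next I would argue that on the event $G$ the configuration reached at the end of the round already lies in $\sep$: by the interleaving case analysis already carried out in the proof of Lemma~\ref{sepalg}, no two robots ever share a location under $G$, and since each robot has completed a full rigid cycle within the round, by the round's end each robot sits at a location occupied by nobody else, with its destination equal to its location — either it never left an already‑isolated spot, or it moved once to a fresh $(\ast)$‑destination and then, being alone, did nothing. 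That is precisely the definition of $\sep$.

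Putting these together, from any configuration and under any scheduler the next round ends in $\sep$ with probability at least $2^{-2n^2}$, and $\sep$ is a safe zone, hence absorbing. Therefore, writing $T$ for the number of rounds until $\sep$ is reached, the Markov property gives $\prob(T>k)\leq (1-2^{-2n^2})^k$ for every $k$, whence the expectation of $T$ is at most $\sum_{k\geq 0}(1-2^{-2n^2})^k = 2^{2n^2}=O(2^{2n^2})$; this is the exact analogue, at the granularity of rounds, of the computation in the proof of Lemma~\ref{safe}. I expect the only delicate point to be the second half of the middle paragraph — certifying that a single ``good'' round already lands in $\sep$ rather than merely making partial progress — since it requires re‑examining, within the span of one round, the asynchronous interleavings (a robot looking while momentarily alone but joined afterwards, a fresh destination falling on a robot that stands still, a robot being observed mid‑move) that the proof of Lemma~\ref{sepalg} dispatches transition by transition.
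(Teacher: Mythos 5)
Your argument is essentially the paper's own proof: the paper likewise invokes the $2^{-2n^2}$ probability from Lemma~\ref{sepalg} that $\sep$ is reached within a round and concludes via the recursion $E \leq 1 + (1-2^{-2n^2})E$, which is equivalent to your geometric-tail summation $\sum_{k\geq 0}(1-2^{-2n^2})^k$. The delicate point you flag yourself --- certifying that a single ``good'' round already lands in $\sep$ despite robots that were already moving when the round began --- is asserted without further detail in the paper as well, so your write-up is, if anything, the more explicit of the two.
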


\begin{proof}
Let $E$ be the expected number of rounds it takes to scatter the robots. There is a $2^{-2n^2}$ probability that $\sep$ is reached in one round (see the proof of Lemma~\ref{sepalg}). Therefore, 
$E \leq 1 + (1-2^{-2n^2})*E \Rightarrow E \leq \mathcal{O}(2^{2n^2})$
\end{proof}

\begin{theorem}
Algorithm \ref{scattering} uses expected $\mathcal{O}(2n * 2^{n^2})$ coin tosses.
\end{theorem}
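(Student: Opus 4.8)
The plan is to bound the expected number of coin tosses used by Algorithm~\ref{scattering} until $\sep$ is reached, combining the per-round success probability $2^{-2n^2}$ established in the proof of Lemma~\ref{sepalg} with a bound on the expected number of coins each robot flips in a single round. First I would recall that, from the analysis of Lemma~\ref{sepalg}, after at most $n$ rounds in which event $(\ast)$ occurs every time, $\sep$ is reached; more crudely, a ``successful phase'' of $n$ consecutive favorable rounds occurs with probability at least $2^{-2n^2}$, independently of $\sigma$ and $C$. Hence the number of phases before scattering is stochastically dominated by a geometric random variable of parameter $2^{-2n^2}$, so the expected number of phases is $O(2^{2n^2})$, and the expected number of rounds is $O(n \cdot 2^{2n^2})$.

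Next I would count coins per round. In a single round each of the $n$ robots executes its cycle at least once; an activated robot that is not alone runs the while loop of lines 5--8, drawing one fresh bit per iteration and stopping on the first $1$. The number of bits drawn by one such robot is therefore a geometric random variable with parameter $\tfrac12$, of expectation $2$. By linearity of expectation, the expected number of coin tosses in one round is at most $2n$. These draws are independent of the event that determines the length of the current phase (the phase length depends on which destinations are chosen, not on how the draws are clamped into phases), so I can multiply: the expected total number of coin tosses is at most (expected number of rounds) $\times$ (expected coins per round) $= O(n \cdot 2^{2n^2}) \cdot 2n = O(2n^2 \cdot 2^{2n^2})$. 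Matching the exponent in the statement, which is written $2^{n^2}$, I would instead phrase the round bound as $O(2^{2n^2})$ absorbing the polynomial factor, or note that the statement's $\mathcal{O}(2n\cdot 2^{n^2})$ should read $\mathcal{O}(2n^2 \cdot 2^{2n^2})$; the clean way is: expected coins $\le 2n \cdot E[\text{rounds}] = 2n \cdot O(2^{2n^2})$.

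The main obstacle is the independence bookkeeping: the per-round coin count and the ``phase succeeds'' indicator are both measurable functions of the same underlying bit stream, so one must be careful that the geometric bound on phases is not distorted by conditioning on long coin sequences within a round. I would handle this by appealing directly to the construction in Lemma~\ref{sepalg}: condition on the scheduler and the sequence of configurations, bound $E[\text{coins in round } j \mid \text{round } j \text{ occurs}] \le 2n$ uniformly (this holds configuration-by-configuration, since a non-alone robot always uses a parameter-$\tfrac12$ geometric number of bits regardless of context), and then use the tower property against the geometric bound on the number of rounds. A secondary, purely cosmetic point is reconciling the exponent $n^2$ versus $2n^2$ in the theorem statement; I would state and prove the bound $\mathcal{O}(n^2 \, 2^{2n^2})$, which is the honest consequence of the round count $\mathcal{O}(2^{2n^2})$ from the preceding theorem.
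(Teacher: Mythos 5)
Your overall route --- multiply a geometric bound on the number of attempts (success probability $2^{-2n^2}$ per attempt, taken from Lemma~\ref{sepalg}) by an expected number of coin tosses per attempt --- is exactly what the paper intends, since its own proof is a one-line reference to the preceding round-complexity theorem. You are also right to flag the exponent: the statement's $2^{n^2}$ is inconsistent with the $O(2^{2n^2})$ round bound, and the honest outcome of this argument is $O(n\,2^{2n^2})$ expected tosses.

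There is, however, a genuine gap in the step ``the expected number of coin tosses in one round is at most $2n$.'' A round is the \emph{smallest} fragment in which every robot completes \emph{at least} one Look-Compute-Move cycle; the ASYNC scheduler can stall one robot to keep the round open while activating the others arbitrarily many times, and since it assigns the frames it can aim a constant-probability destination (say $u(0)$) at an occupied location, re-creating multiplicity points and triggering further coin-tossing computations inside the same round. So the number of coin-tossing Compute transitions per round is not bounded by $n$, and the claim that the $2n$ bound ``holds configuration-by-configuration'' is not justified. The repair is to change the unit of accounting from rounds to coin-tossing computations, i.e.\ Compute transitions executed at a multiplicity point: by the argument of Lemma~\ref{sepalg}, if event $(\ast)$ holds at each of the next coin-tossing computations, then there are at most $n$ of them ever, after which the execution is in the safe zone $\sep$ and no robot ever tosses again; hence the total number of coin-tossing computations is stochastically dominated by $n$ times a geometric variable of parameter $2^{-2n^2}$, so its expectation is at most $n\,2^{2n^2}$. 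Each such computation consumes a number of random bits that is geometric of parameter $\frac12$, independently of the past, so Wald's identity (or the tower argument you sketch) gives an expected total of at most $2n\,2^{2n^2}$ tosses --- the same final bound you state, but without the unjustified per-round count.
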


\begin{proof}
The proof is similar to the last theorem.
\end{proof}

By tweaking the algorithm \ref{sepalg}, we can get a better bound of $E \leq (\frac{4 \pi ^2}{6})^n * n^{2n}$. For this change, the probability of picking the $i$-th direction to $\sim \frac{6}{\pi^2 i^2}$ instead of $2^{-i}$.


We know that in the SSYNC setting, scattering can be achieved in $\Theta(n \log n)$ coin throws on average. 
In the ASYNC setting, assuming a robot throws a coin anytime it chooses a new destination, it turns out that $\Omega(n^2)$ coin tosses are needed for flexible scattering in $\R$.

\begin{theorem}
Assuming robots are not blind and have weak local multiplicity detection, any algorithm that solves flexible ASYNC scattering in $\R$ and makes a robot throw at least one coin anytime it chooses a new destination requires $\Omega(n^2)$ coin tosses.
\end{theorem}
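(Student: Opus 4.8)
The plan is to exhibit an adversarial scheduler together with a "hard" initial configuration on which any such algorithm is forced to toss $\Omega(n^2)$ coins before it can guarantee that all robots are separated. The key structural observation, borrowed from the positive results above, is that in flexible ASYNC on the line, the scheduler can keep a robot essentially pinned: whenever a robot at a multiplicity point computes a destination, the scheduler can (i) delay the move, (ii) let other robots act, and (iii) interrupt the move after only $\delta$, so that separation is never actually achieved unless the robot gets "lucky" in a way that commits it. Concretely I would start from a single stack of $n$ robots, all at the same location, and have the scheduler wake them one at a time. The adversary maintains the invariant that the set of occupied locations has small diameter, and it uses the \emph{move and interrupt} transition so that a robot that chose to move merely drifts by $\delta$ but remains grouped with some others (because $\delta$ is chosen by the adversary, it can always arrange that the interrupted robot lands on, or re-stacks with, another robot or is brought back into a stack by having the others chase it). This is exactly the failure mode described in the introduction for flexible moves.

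The core counting argument is then an entropy / decision-tree lower bound. Let $T$ be the (random) total number of coin tosses performed before the execution enters $\sep$ and stays there. Since the algorithm is correct, for every fair scheduler the execution reaches $\sep$ with probability $1$, so $T$ is a.s. finite; but I claim that under the adversarial scheduler above, $\mathbb{E}[T] = \Omega(n^2)$. The idea: to separate a stack of $k$ co-located robots, the algorithm must, over the course of the execution, generate enough distinct "random outcomes" to assign the $k$ robots to $k$ distinct final positions, and by hypothesis each robot tosses at least one coin every time it picks a new destination. The adversary forces each of the $k$ robots in the stack to recompute many times: each time a robot computes and chooses a destination, the adversary checks whether its chosen (interrupted) landing spot is still shared with another robot in the group; if so — and with the stack structure the adversary can always keep at least $\Omega(k)$ robots effectively undecided — that robot must compute again, tossing at least one more coin. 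Summing a "potential" that counts, for each pair of still-co-located robots, the number of forced recomputations, the adversary can guarantee on the order of $\sum_{k=2}^{n} k = \Omega(n^2)$ coin tosses in expectation. A clean way to package this is: define $\Phi$ = number of unordered pairs of robots currently at a common location; each forced recomputation can decrease $\Phi$ by at most $n-1$ (only the recomputing robot changes position), $\Phi$ starts at $\binom{n}{2}$, and the adversary ensures (by re-stacking via interrupts) that $\Phi$ drops by $\Theta(1)$ on average per coin toss near the end — or, more carefully, that the total number of coin-tossing computations is at least $\binom{n}{2}/(n-1)\cdot c \cdot n = \Omega(n^2)$ by forcing each of the $\Theta(n)$ "layers" of the stack to be resolved by $\Theta(n)$ separate tossing events.

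The main obstacle I anticipate is making the adversary genuinely \emph{oblivious to the algorithm's coin flips in a way that still forces recomputation}: the scheduler sees the chosen destination (it is part of the configuration after a \emph{computation} transition) and chooses the interrupt point afterwards, so it does know where the robot wanted to go — the subtlety is rather to show that \emph{no matter what destinations the $n$ robots pick}, the adversary can keep $\Omega(n)$ of them entangled while only allowing $O(1)$ robots to permanently separate per "phase," and that this does not accidentally violate fairness (every robot must compute infinitely often). Fairness is handled by noting that once a robot has genuinely separated the adversary lets it sit idle (Line 2 makes it a no-op), so the adversary only needs to juggle the still-stacked robots, and it can round-robin among them. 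The second delicate point is the lower bound bookkeeping: one must be careful that "at least one coin per new destination" is used correctly — a robot could in principle choose the \emph{same} destination twice without tossing, but the hypothesis of the theorem explicitly forbids that (it tosses at least one coin \emph{anytime} it chooses a new destination), and a robot that recomputes at a multiplicity point and returns the identical destination would be stuck forever, contradicting correctness, so each of the forced recomputations indeed costs a fresh coin. I would present the argument as: (1) describe the adversarial scheduler and the re-stacking invariant; (2) prove the invariant is maintainable and the scheduler is fair; (3) set up the potential $\Phi$ and show each layer of the initial $n$-stack forces $\Omega(n)$ coin-tossing computations; (4) sum to get $\Omega(n^2)$ and conclude.
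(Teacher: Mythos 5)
Your overall flavor is right (an adversarial scheduler exploiting ASYNC holds and interrupts to force repeated computations at multiplicity points, each costing a coin), but the mechanism that is supposed to generate $\Omega(n^2)$ forced tosses is exactly the part that is missing or unsound. Your invariant rests on the claim that the adversary ``can always arrange that the interrupted robot lands on, or re-stacks with, another robot or is brought back into a stack by having the others chase it.'' The scheduler controls only timing and where along a robot's \emph{chosen} segment it is stopped; it does not choose destinations, so it cannot make robots chase anything, and two co-located robots that pick destinations on opposite sides (or with disjoint path segments) separate the moment either one moves, with no way for the adversary to re-merge them against an arbitrary correct algorithm. Moreover, your potential bookkeeping, taken as stated, gives the wrong order: $\Phi$ starts at $\binom{n}{2}$ and each recomputation decreases it by at most $n-1$, which yields only $\Omega(n)$ tosses; the strengthening to ``$\Theta(n)$ layers each resolved by $\Theta(n)$ tossing events'' is asserted, not proved, and proving it is precisely the content of the theorem.

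The paper's proof supplies the missing construction, and it does not rely on indefinite re-stacking at all. Start with all $n$ robots at the origin and let every robot compute once: at least $n/2$ of the chosen destinations lie on the same side, so there is a point $p$ that at least $n/2$ robots intend to cross. The scheduler disseminates $n/4$ of these along the segment from the origin to $p$ by interrupting them early, and keeps the other $n/4$ on hold in mid-cycle (destination computed, move not yet executed). It then releases the held robots one at a time: as a held robot sweeps toward $p$, its path necessarily passes through the position of every disseminated robot, and the scheduler pauses it exactly there, at which moment the disseminated robot is at a multiplicity point and is made to compute, tossing a coin; an infinitesimal move-and-interrupt then keeps that robot inside the segment so it lies on the path of the next sweeper. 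Each of the $n/4$ sweeps forces $n/4$ computations, giving $\Omega(n^2)$ coin tosses without ever needing to keep a large stack entangled or to control any robot's destination. If you want to salvage your write-up, this overlap-of-paths observation (many robots committed to crossing a common point) is the ingredient you need in place of the re-stacking invariant.
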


\begin{proof}
\begin{figure}[h]
\centering
    \includegraphics[width=0.9\textwidth]{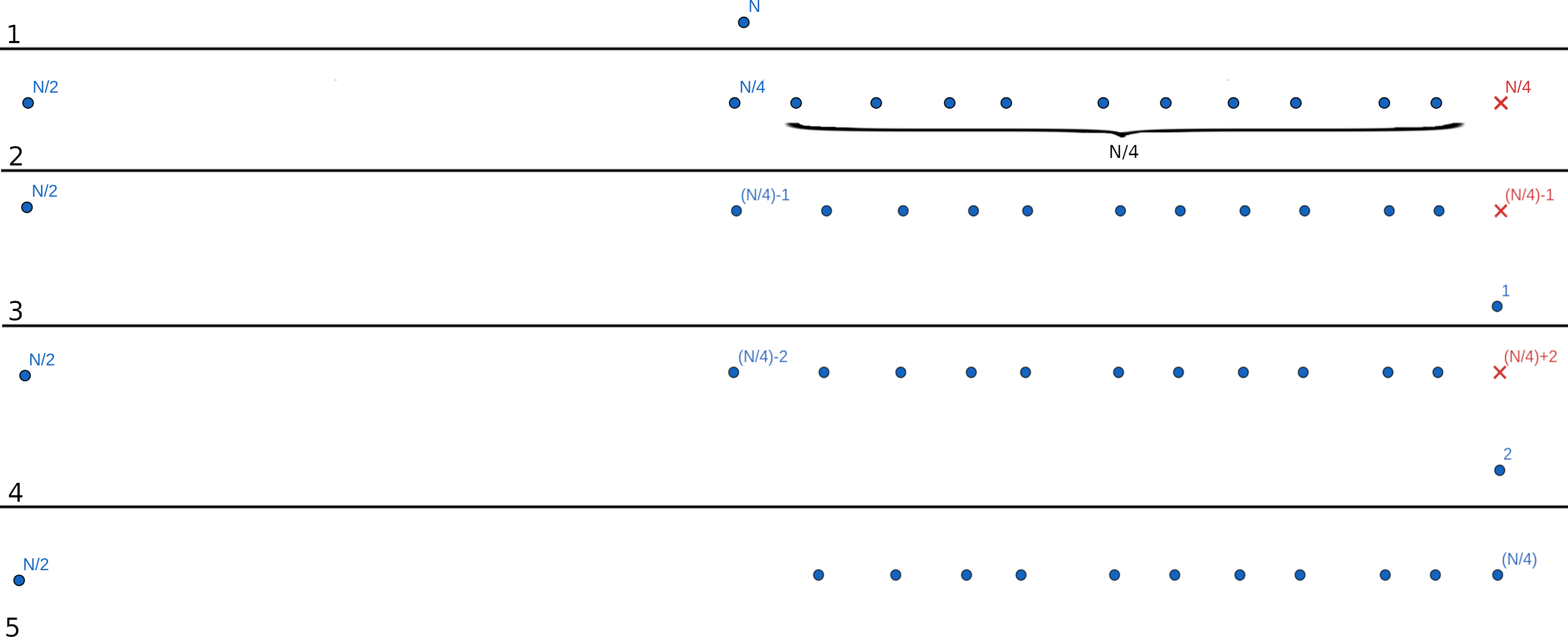}
    \caption{Steps of the proof}
    \label{1Dlower}
\end{figure}
At step $1$, suppose every robot is at the same position (at the origin of the coordinate system).
At step $2$, every robot computes. There is thus a point $p$ (denoted by a red cross on Figure~\ref{1Dlower}) on the left or on the right such that at least $n/2$ robots wish to cross $p$. Out of these $n/2$ robots, we disseminate $n/4$ along the path to $p$ by prematurely stopping them. The $n/4$ other robots are put on hold. At step $3$, we choose one of the $n/4$ robots that were put on hold at step $2$, and call it $q$. We now make $q$ move toward $p$ in an multi-stop fashion: anytime $q$ crosses one of the disseminated $n/4$ robots, say $q^\star$, $q$'s movement is suspended, and $q^\star$ computes a new destination (since $q^\star$ and $q$ occupy the same location at this moment). When $q$ reaches $p$, the scheduler stops it.

Each of the $n/4$ has made a computation transition (and hence a coin toss). The scheduler now moves and interrupts each of those in such a way that they remain between the origin and $p$. 
The process is then repeated with every remaining $n/4-1$ robots that were put on hold. As a result, each of the $n/4$ disseminated robots computes $n/4$ times, yielding a $\Omega(n^2)$ number of coin tosses. 
\end{proof}

Our assumption that a robot tosses a coin anytime it chooses a new destination is justified by the following observation: in our schedule, the scheduler only selects for computation robots that are on a multiplicity point; now if a robot on a multiplicity point does not perform a coin toss, its behavior may be identical to that of other robots on the same multiplicity point, delaying scattering.   

\clearpage

\section{Necessary and Sufficient conditions for ASYNC scattering}
\label{sec:computability}

In this section, we consider necessary and sufficient conditions for the solvability of ASYNC scattering depending on system hypotheses: rigidity of moves, dimensionality of the locations space, vision, and availability of weak local multiplicity detection. The summary of our results is presented in Table~\ref{resultstable}.

\begin{table}
\centering
{\small
\begin{tabular}{|l|l|c | c| c| c|}
\hline
\textbf{Vision} & \textbf{Weak Local} & \textbf{1D rigid}       & \textbf{2D rigid}       &  \textbf{1D flexible} & \textbf{2D flexible}      \\
& \textbf{Multiplicity}& & & & \\
\hline
Yes & Yes & OK (Th.\ref{thm:RFlexScattering}) & OK (Th.~\ref{theo2Drigid})  & OK (Th.~\ref{thm:RFlexScattering})  & OK (Th.~\ref{thm:R2FlexScattering})        \\
\hline
No & Yes & KO (Th.~\ref{no1Dscatblindrigid})     & OK (Th.~\ref{theo2Drigid})   & KO (Th.~\ref{no1Dscatblindrigid})   & ?        \\
\hline
Yes & No & KO (Th.~\ref{noWLno}) & ? & KO (Th.~\ref{noWLno}) & ? \\
\hline
No & No & KO (Th.~\ref{thm:impossible}) & KO (Th.~\ref{thm:impossible}) & KO (Th.~\ref{thm:impossible}) & KO (Th.~\ref{thm:impossible}) \\
\hline
\end{tabular}
}
\caption{Summary of possibility and impossibility results}
\label{resultstable}
\end{table}

\begin{theorem}
\label{no1Dscatblindrigid}
Rigid ASYNC scattering in $\R$ is impossible if the robots are blind, even with weak local multiplicity detection.
\end{theorem}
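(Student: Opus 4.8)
The plan is to argue by contradiction: suppose some algorithm $\alg$ solves rigid ASYNC scattering in $\R$ for blind robots equipped with weak local multiplicity detection. I will build an initial configuration $C$ and a fair scheduler $\sigma$ under which, with probability bounded away from $0$, some pair of robots shares a location at infinitely many steps of $E^{C}_{\sigma}$; this is incompatible with scattering, which requires the execution to reach and then stay in the set of configurations where robot locations are pairwise distinct (a strict superset of $\sep$). That gives the contradiction.

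The first step is to record what blindness leaves to a robot's computation. A blind robot's \textbf{computation} transition depends only on the weak local multiplicity bit (and its private coins), so it is governed by two fixed laws: one used when the robot is alone, one used when it is not. If the ``alone'' law ever moves a robot, the classical reflection argument breaks scattering at once — place two isolated robots with mirror‑image frames (same scale, opposite chirality) so their moves are exact mirror images and never reconcile, or they both stay in place and never separate; hence we may assume a lone robot stays put, as in every algorithm of the paper. When the robot is not alone, its displacement along the line is $\epsilon\,\sigma\,D$, where $D$ has a fixed law $\mu$ determined by $\alg$, and the scale $\sigma>0$ and orientation $\epsilon\in\{\pm1\}$ are assigned by the scheduler at the preceding \textbf{choice of frame} transition. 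Since $\alg$ must eventually separate co‑located robots, $\mu\neq\delta_0$.

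Next I isolate the elementary move that manufactures multiplicity. In the rigid model the scheduler still controls the granularity of \textbf{move} transitions, so a robot $B$ travelling toward a destination $q$ can be stopped momentarily at any point of $(\mathrm{pos}(B),q]$; in particular, if a second robot $A$ currently sits at such a point $q'$, the scheduler can realize the configuration in which $B$ stands exactly on $A$ — a multiplicity point at $q'$ — after which $B$ resumes toward $q$. Combined with the fact that a robot kept permanently co‑located with a helper robot is permanently mobile (it always reads ``not alone'', hence keeps computing nonzero displacements), this lets me aim $\sigma$ at an ``overtaking'' pattern recurring forever: keep a slow robot $A$ confined to a tiny region (handed a far destination, advanced only by minuscule partial moves), and repeatedly launch a helper‑paired robot $B$ so that $B$'s random displacement overshoots $A$, the scheduler slides $B$ onto $A$, and then — via further partial moves — a fresh co‑located pair is reconstituted out of the overtaking cluster to start the next round. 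On the event that every round's launch overshoots, two robots coincide at infinitely many steps, contradicting scattering.

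The main obstacle, and the part demanding real care, is the quantitative claim that this good event has positive probability. A single launch overshoots $A$ only if its displacement lands on the correct side of $A$; the magnitude can be made arbitrarily likely to be ``large enough'' by choosing a large scale (exploiting that blind robots have no reference scale), but the \emph{side} is, for a balanced law $\mu$, essentially an unbiased coin the scheduler cannot bias, so a naive one‑pair construction has constant per‑round failure probability and the infinite product of successes vanishes. I would attempt to defeat this by launching, at each round, a whole layer of helper‑paired robots simultaneously — from both sides of $A$, so that whichever way a launcher happens to move it may still overtake $A$ — and by recycling the overtaking cluster as the next layer; a concentration estimate then drives the per‑round failure probability down, and a self‑similar nesting of the whole gadget between rounds (a mere rescaling, which blindness renders invisible to the robots) is arranged so that these failure probabilities are summable, whence a Borel–Cantelli / infinite‑product argument yields a positive probability that every round is good. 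Showing that this recursion closes up, that the resulting $\sigma$ is fair (lone robots trivially ``compute'' the null move, and every robot is eventually cycled), and that all coordinates and scales stay legal, is the technical bulk of the argument and the step I expect to be genuinely delicate.
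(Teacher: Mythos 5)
Your reduction of blindness to a fixed ``not alone'' move law, and your observation that in the rigid ASYNC model the scheduler can pause a moving robot at any intermediate point of its path --- in particular exactly on top of another robot --- are both sound, and you correctly identify the central obstacle: rescaling the frame lets the scheduler inflate the \emph{magnitude} of a computed move with probability as close to $1$ as desired, but when the not-alone law is symmetric the \emph{side} on which the robot leaves is an unbiased coin the scheduler cannot influence. The gap is in your proposed cure. With a fixed number $n$ of robots, each round of your gadget launches at most $n$ helper-paired robots, so the probability that every launcher picks the wrong side is bounded below by a constant depending only on $n$ (at least $2^{-n}$ for a symmetric law), independently of the round index; the per-round failure probabilities therefore cannot be made summable, the infinite product of success probabilities over infinitely many rounds is $0$, and a launcher that does fail ends up alone, hence (under your own reduction) immobile and unusable, so failures cannot be absorbed by recycling either. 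The self-similar rescaling between rounds changes distances, not these coin probabilities, so the Borel--Cantelli step at the heart of your sketch cannot close --- this is exactly the step you flag as delicate, and it is where the argument breaks. Two smaller points: the ``mirror frames'' justification for assuming lone robots stay put does not apply to probabilistic robots with independent coins (the correct argument is that a lone robot that moves with positive probability can be aimed by the scheduler at another robot's current position, producing collisions infinitely often); and keeping $A$ forever with a pending far destination violates fairness as defined, since $A$ would never again perform a computation transition.

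The paper sidesteps the direction coin instead of fighting it. It uses three robots: an isolated robot $h$, and a co-located pair of which one member $p$ is still while the other, $q$, may have a pending destination pointing away from $h$. The scheduler activates $p$ with a scale chosen so that, with probability $1-2^{-(i+1)}$ at stage $i$, the computed distance exceeds both $dist(p,h)$ and the distance from $p$ to $q$'s destination. Then \emph{either} direction recreates a multiplicity point: if $p$ heads toward $h$, its rigid path passes through $h$ and the scheduler pauses it exactly there; if it heads the other way, it overshoots $q$'s destination, so the scheduler lets $q$ finish its move and pauses $p$ on top of it (or symmetrically). Because both sides of the line carry an occupied target on the mover's path, only the magnitude ever needs to be controlled, and the product $\prod_{i}(1-2^{-(i+1)})\geq 1/2$ gives a single execution pattern, valid under one scheduler, along which $\sep$ is never reached with positive probability. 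That two-sided-target idea is the missing ingredient in your construction; with it, the elaborate layered launching and nesting become unnecessary.
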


\begin{proof}
Consider three robots $A$, $B$, and $C$ that are still and $A$ is on the left of $B$, while $C$ shares $B$'s position. Let $d \in \R$ be a positive distance. Then, for any $p \in ]0,1[$, the scheduler can force a robot that computes a new destination to compute a distance at least $d$ with probability $p$ (indeed, the scheduler can change the scale of the coordinate system at every Look-Compute-Move cycle). 
We now construct a schedule such that there is a positive probability that the robots never reach $\sep$.
We define \emph{routine(i)} as "Assume there are exactly two robots $p$ and $q$ sharing the same location and at least one of them, $p$, is still, while the third robot is called $h$. We also assume that $q$ is not going toward $h$. If at least one of the assumptions is false, the routine aborts, otherwise, the scheduler does the following:
\begin{itemize}
\item If $q$ is still, the scheduler makes $p$ compute and ensures it chooses a destination that is farther than $dist(p,h)$ with probability $1-\frac{1}{2}^{(i+1)}$. If $p$ chooses to go towards $h$, the scheduler makes it move to the exact location of $h$, from which we execute \emph{routine(i+1)}. Otherwise, the scheduler simply executes \emph{routine(i+1)}.
\item If $q$ is not still, the scheduler makes $p$ compute and ensures it chooses a destination that is farther than $max(dist(h,p),dist(dest(q),p)$ with probability $1-\frac{1}{2}^{(i+1)}$. If $p$ goes towards $h$, the scheduler makes it move to the exact location of $h$ and executes \emph{routine(i+1)}. If $p$ goes farther than $q$, the scheduler makes $q$ go to its destination, then makes $p$ move to the location of $q$, and finally execute \emph{routine(i+1)}. If $q$ goes further than $p$, the scheduler  makes $p$ go to its destination, then makes $q$ move to the location of $p$ and execute \emph{routine(i+1)}."
\end{itemize}

Now, with the initial configuration we considered, let the scheduler execute \emph{routine(0)}. As we request the scheduler to ensure a certain moving distance with a certain probability, for any integer $i$, the \emph{routine(i)} never aborts, and the robots never reach $\sep$. The probability that this execution occurs is $\prod_{i=1}^{+ \infty} (1-\frac{1}{2}^{(i+1)}) \geq 1/2$. 
\end{proof}

\begin{theorem}
\label{noWLno}
Rigid ASYNC scattering in $\R$ is impossible if robots have no multiplicity detection, even with full vision. 
\end{theorem}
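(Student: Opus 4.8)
The plan is to mimic the recursive construction used in the proof of Theorem~\ref{no1Dscatblindrigid}, the new ingredient being that, with vision but \emph{no} multiplicity detection, a robot sitting on a multiplicity point $\ell$ observes exactly the occupied set it would observe were it alone at $\ell$; the scheduler can therefore keep two robots stacked on one location indefinitely while each of them, whenever activated, ``believes'' it sees a perfectly scattered configuration. Concretely, I would take three robots $A,B,C$ on the line, an initial configuration in which two of them share a location $\ell$ and the third is elsewhere and still, and exhibit a scheduler under which, with probability at least $\prod_{i\ge 1}\bigl(1-2^{-(i+1)}\bigr)\ge \tfrac12>0$, some location stays occupied by at least two robots forever, so that $\sep$ is never reached. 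By the remark following Theorem~\ref{thm:choice} (if $\sep$ were reached with probability $1$ it would also be reached with positive probability from every reachable configuration and every scheduler), this shows $\alg$ does not solve the problem.

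The engine is a routine $\mathrm{routine}(i)$ that, from a configuration with two robots on a location $\ell$ (one of them still) and the third robot off $\ell$, performs a single computation and hands control to $\mathrm{routine}(i+1)$ while preserving that invariant with probability $\ge 1-2^{-(i+1)}$. The scheduler activates the still stacked robot and assigns it a frame with a very large scale $s$, so that in that robot's own coordinates the two occupied points lie within distance $\epsilon=\Theta(1/s)$ of each other; it then argues that the computed destination is, with probability $\ge 1-2^{-(i+1)}$ (for $s$ large enough), either (i) equal to $\ell$ --- the robot stays, the stack is untouched --- or (ii) a point $d$ at global distance strictly larger than $\mathrm{dist}(\ell,\mathrm{pos}(C))$ from $\ell$ in the direction of the third robot $C$. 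In case (ii) the scheduler exploits that rigid moves may be executed piecewise: a single \emph{move} transition legitimately carries the robot from $\ell$ to $\mathrm{pos}(C)\in\,]\ell;d]$, re-creating a stack of two on $\mathrm{pos}(C)$, after which $\mathrm{routine}(i+1)$ is called with this new pair and the former partner as the ``third robot'' (when that partner is itself mid-move, one first lets it reach its destination and then parks the other robot on it, exactly as in the blind routine). Since at every level the still member of the pair performs a computation and the role of ``still stacked robot'' cycles through all three robots, the constructed schedule is fair, and along the product-of-probabilities branch the three robots are never pairwise distinct.

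The delicate point --- and the main obstacle --- is the uniform bound ``probability $\ge 1-2^{-(i+1)}$ of case (i) or (ii)'' for an \emph{arbitrary} probabilistic $\alg$, whose destination law may be atomless and may depend discontinuously on the (rescaled) observed configuration. The intended justification: since $\alg$ solves scattering it must, from the configuration where the robot's view is the single point $\{0\}$ in its frame (two robots stacked, seeing only their own location), move by at least some fixed positive amount in frame units with probability bounded away from $0$; and a view $\{0,\epsilon\}$ with $\epsilon\to 0$ cannot be told apart ``in the limit'' from $\{0\}$, so along a suitable sequence $\epsilon\to 0$ the destination law stays, with probability bounded below uniformly, outside a fixed frame-units neighbourhood of $0$ --- unless it concentrates at $0$, which is case (i). Choosing $s$ large turns ``outside a fixed frame-units neighbourhood of $0$'' into ``past $\mathrm{pos}(C)$'' in global coordinates, and the residual mass is absorbed in the $2^{-(i+1)}$ slack. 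Making this limiting/compactness argument on destination laws fully rigorous --- in particular ruling out that $\alg$ parks all of its ``motion probability'' just below the relevant threshold for every large scale --- is the step that will require the most care; everything else is bookkeeping analogous to the proof of Theorem~\ref{no1Dscatblindrigid}.
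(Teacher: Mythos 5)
There is a genuine gap, and it sits exactly where you flagged it: the scale-manipulation trick that powers the proof of Theorem~\ref{no1Dscatblindrigid} does not survive the addition of vision. A blind robot's computed destination lives purely in its own frame, so the scheduler can dilate that frame to turn any nonzero frame-distance into any desired global distance. With vision, the robot sees the third robot at some frame coordinate $x$, and the algorithm may output a destination that is \emph{covariant} with that observation --- for instance ``move by $|x|/3$ toward or away from the nearest other occupied point,'' exactly in the spirit of the paper's own Algorithm~\ref{onescattering}. For such an algorithm your dichotomy fails with probability $1$: the destination is never $\ell$ and never beyond $\mathrm{pos}(C)$, no matter what scale the scheduler assigns, so cases (i) and (ii) both have probability $0$ and $\mathrm{routine}(i)$ cannot recurse. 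The limiting argument you sketch ($\{0,\epsilon\}$ indistinguishable from $\{0\}$ as $\epsilon\to 0$) cannot be repaired: the destination law need not be continuous at $\epsilon=0$, and an algorithm that ``parks all of its motion probability just below the relevant threshold for every large scale'' is not a pathology to be excluded but a perfectly legitimate scattering algorithm. More fundamentally, the overall plan --- keep a stack of two alive forever --- is not achievable here: with rigid moves the scheduler cannot prevent two stacked robots from choosing distinct nearby destinations and separating for good.

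The paper's proof goes the other way around: instead of showing that stacked robots can be kept stacked, it shows that \emph{separated} robots cannot be kept separated. First, in a three-robot configuration with $B$ and $C$ stacked opposite $A$, if the algorithm only ever moved away from $A$, a ping-pong schedule (let the shorter mover finish, walk the other onto it, reactivate, repeat) would make $B$ and $C$ coincide infinitely often; hence the algorithm must move \emph{toward} the other occupied location with positive probability. Then, because there is no multiplicity detection, a robot in a separated two-robot configuration has the very same view, so two separated robots move toward each other with positive probability, and a short case analysis (absolute versus fractional move length) lets the scheduler engineer a new collision. This violates the closure of $\sep$ and, by the converse direction of the Theorem of Choice, rules out scattering. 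If you want to salvage your attempt, this is the indistinguishability you need: not ``stacked looks like alone at $\ell$,'' but ``on a multiplicity point with a far neighbour looks like separated with a far neighbour.''
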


\begin{proof}
Let us first consider a network of three robots $A$, $B$, and $C$ such that $A$ is on the left, and $B$ and $C$ occupy the same location on the right. All robots are still. Now, when $B$ is activated, there must be a positive probability that $B$ computes a new position. Suppose $B$ only selects one direction: opposite to $A$. Then, the same holds for $C$, as the algorithm is uniform for all robots. The scheduler then uses the following schedule: \emph{(i)} activate $B$ until $B$ decides to move to the right by some distance $d_B>0$, then \emph{(ii)} activate $C$ until $C$ decides to move to the right by some distance $d_C>0$, then \emph{(iii.a)} if $d_B<d_C$, move $B$ until it reaches its destination, move $C$ until it reaches $B$, activate $B$ until it decides to go right again by some distance $d'_B>0$, then repeat the process until $d'_B>d_C-d_B$, or \emph{(iii.b)} if $d_B>d_C$, move $C$ until it reaches its destination, move $B$ until it reaches $C$, activate $C$ until it decides to go right again by some distance $d'_C>0$, then repeat the process until $d'_C>d_B-d_C$. This schedule yields an execution such that robots $B$ and $C$ occupy the same position infinitely often, hence scattering is never achieved.
This implies that robots $B$ and $C$ must run an algorithm such that there exists a positive probability to move to the left (that is, toward $A$). 

Now, consider a network of two robots $A$ and $B$ initially separated and still (hence satisfying $\sep$), where $A$ is on the right. Since $B$ has exactly the same vision as in the previous situation (it has no multiplicity detection), there is a positive probability that $B$ goes to the left (toward $A$). By symmetry, there is also a positive probability that $A$ goes to the right (that is, toward $B$).
Now, the distance can be oblivious of the distance between $A$ and $B$ (that is, an absolute value based on the unit distance given by the scheduler) or a function of the distance between $A$ and $B$ (since the scale is given by the scheduler, the perceived distance can be always one unit, so this equates to a fraction of the distance). In the case of absolute value, the scheduler simply chooses the distance between $A$ and $B$, so that they occupy the same position again, invalidating $\sep$ safety. In the case of a fraction, say $1/3$ (see Figure~\ref{1DnoWL}, the same number is chosen by both $A$ and $B$ with probability $p>0$ since they have the same view. Then $A$ computes a right target with probability $p$ but does not move, then $B$ computes and move four times in a row, hence, with probability $p^5$, $A$ and $B$ occupy the same position again, hence $\sep$ is not safe. If the fraction is $1/k$, for some integer $k$, the overall probability of a meeting of $A$ and $B$ becomes dependent of $k$, but remains positive.
\begin{figure}[h]
\centering
    \includegraphics[width=.9\textwidth]{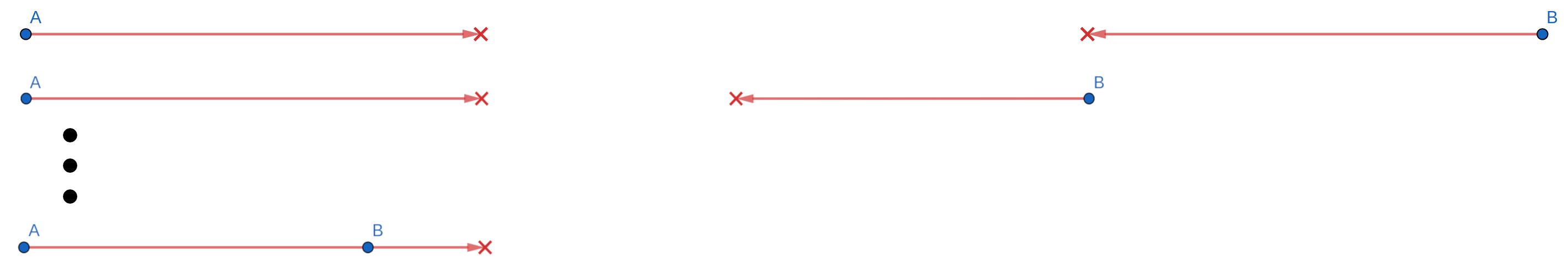}
    \caption{Case where $f=1/3$}
    \label{1DnoWL}
\end{figure}
\end{proof}

One may wonder if it is possible to scatter robots that are blind and have no weak local multiplicity detection (even in $\Rs$). The following theorem show that it is not the case.

\begin{theorem}
\label{thm:impossible}
Rigid ASYNC scattering in $\Rs$ is impossible if robots are blind and have no weak local multiplicity detection.
\end{theorem}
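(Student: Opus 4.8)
The plan is to show that every candidate algorithm $\alg$ fails, by exhibiting for $\alg$ a fair scheduler and an initial configuration from which, with probability $1$, two robots share a location at infinitely many times, so the robots never \emph{remain} at distinct positions. The starting observation is that a blind robot without multiplicity detection receives the same (empty) snapshot in every configuration; hence during a \emph{computation} transition the only thing $\alg$ can use is its own coin flips, so $\alg$ induces a single distribution $\mu$ on $\Rs$ giving, in the robot's own frame, the vector from its current position to its computed destination (the value $0$ meaning ``do nothing''). A \emph{computation} transition must terminate almost surely (otherwise the execution is not fair and the output distribution ascribed to $\alg$ by the model is undefined), so it reads finitely many values of $rbit()$ and its outcome is a function of a finite bit string; therefore $\mu$ is supported on a countable set, i.e.\ is purely atomic. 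If $\mu=\delta_0$, two robots started at a common location never move and $\alg$ fails, so we may assume $\mu(\{0\})<1$; being atomic, $\mu$ then has an atom at some vector $w_0\neq 0$, and we set $a_0:=\mu(\{w_0\})>0$.

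The crux is that, although the scheduler only fixes a robot's frame --- a similarity (scale, rotation, possibly a reflection) --- \emph{before} the robot draws its random move, the atom $w_0$ lets it aim exactly. I would isolate two sub-procedures. \emph{Merge}: given robots $P$ at $p$ and $Q$ at $q\neq p$, the scheduler never activates $P$ and repeatedly gives $Q$ a frame whose scale and orientation are chosen so that drawing the vector $w_0$ would put $Q$'s destination exactly at $p$; each attempt succeeds with probability $a_0$, and since the frame map is injective no other outcome of $\mu$ sends $Q$ to $p$ under that aim, so after an almost surely finite number of attempts $Q$ draws $w_0$ and its rigid \emph{move} carries it onto $p$. \emph{Split}: from a configuration where $P$ and $Q$ share a location, the scheduler activates $P$ until $P$ draws a move $\neq 0$ (almost surely finitely many attempts, since $\mu(\{0\})<1$) and then completes $P$'s rigid move, separating the two robots.

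I would then take $n=2$ robots initially stacked and have the scheduler alternate \emph{split}, \emph{merge}, \emph{split}, \emph{merge}, $\dots$ forever. Each phase ends almost surely, hence almost surely all of them occur; each \emph{merge} returns the two robots to a common location, so with probability $1$ they coincide at infinitely many times and no suffix of the execution keeps them apart. Both robots are activated during infinitely many phases, so the scheduler is fair. For general $n\geq 2$, the extra $n-2$ robots are kept permanently apart from everyone: whenever such a robot is activated, the scheduler picks its scale outside the countably many values that would place its destination on one of the finitely many other robots. Hence $\alg$ does not solve scattering, which is the desired contradiction. (The same construction, using a scheduler that never interrupts moves, also rules out the flexible variant.)

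The step I expect to be delicate is the atomicity of $\mu$: it is exactly what yields a fixed positive success probability $a_0$ at each aiming attempt. With a non-atomic move distribution a prescribed target would be hit with probability $0$ at every step, and an adversary steering what is essentially a planar random walk could not force exact re-collisions; so the argument genuinely relies on computation transitions halting after finitely many coin flips.
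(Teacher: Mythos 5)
Your proposal is correct and follows essentially the same route as the paper's proof: because the robots are blind and lack multiplicity detection, the scheduler can choose a frame that aims an activated robot at another robot's current position with a fixed positive probability, and repeating this forces collisions almost surely and infinitely often (with the same fix for fairness). Your explicit derivation that the move distribution is atomic (finitely many coin flips almost surely), which is what justifies the fixed success probability of each aiming attempt, is a point the paper leaves implicit, so your write-up is in fact slightly more careful on that step.
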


\begin{proof}
Since the robots do not have weak local multiplicity detection and are blind, there exists probability $p>0$ that the robots move, otherwise they would never separate if placed on the same location. 
Let $d\in \Rs$ be a position, since the robots are blind by adjusting the scale and orientation of their referential the scheduler can take the robot to position $d$ with probability $p>0$. Let us suppose we have $n>1$ robots in $\Rs$ that execute a scattering algorithm. Let us also suppose those robots are in a configuration that satisfies $\sep$. We now show that it is possible that two robots meet with positive probability. Let $d$ be the position of robot number $1$, the scheduler activates robot number $2$ such that robot $2$ has a probability $p$ of choosing $d$ as a destination. If robot $2$ chooses $d$ then we have reached a configuration not in $\sep$ with probability $p$. If robot $2$ does not choose $d$, the scheduler makes robot $2$ move to its destination and once arrived, activates it again so that it chooses $d$ with probability $p$. By doing this repeatedly, we eventually reach a conifguration not in $\sep$.

Admittedly, a fairness issue may arise, if robot $1$ is not activated fairly. However, the arguments remains the same if robot $1$ is activated and moves every $k$ (for some integer $k>0$) activations of robot $2$, as the probability $p$ is independent of the position of robot $1$.
\end{proof}

The first unresolved case in this table is whether it is possible to scatter blind robots in $\Rs$ with flexible moves. We conjecture the answer is yes and that Algorithm~\ref{scattering} is a solution. However, the proof argument is expected to be more involved than the one we provided. 
The second unresolved case is whether it is possible to scatter robots without weak local multiplicity detection in $\Rs$ (be it rigid or flexible moves). We also conjecture the answer to this question is yes, but different algorithmic techniques from the ones we provided are expected to be necessary.

\section{Concluding Remarks}
\label{sec:conclusion}

We presented the first solutions to the oblivious mobile robots asynchronous scattering problem. Contrary to previous work, we do not assume that the configuration observed by the robots is always up to date or does not features moving robots. It turns out that the problem can be solved in the most general case when robots are able to see other occupied location, and can determine whether more than one robot occupies their current location. Our positive results are constructive, and a byproduct of our approach is a new proof technique that could reveal useful for other oblivious mobile robot problems requiring randomization.
Perhaps surprisingly, it remains possible to solve the problem when robots are blind (they cannot see other occupied locations) in a bidimensional Euclidean space. Releasing both assumptions (vision and weak local multiplicity detection) yields immediate impossibility, while removing one of them yields impossibility in the case of line. Some cases remain open and most likely require new proof arguments.

While our approach was mostly driven by a computability perspective, our positive result command a thorough complexity analysis of the problem. Preliminary results we obtained show that, under a reasonable assumption, a lower bound of $\Omega(n^2)$ coin tosses is necessary to scatter $n$ robots in ASYNC. This contrasts with the weaker SSYNC model where $\Theta(n\log n)$ coin tosses are tight. We leave this path of research as a future work. 

\bibliographystyle{plain}
\bibliography{biblio}

\begin{thebibliography}{1}

\bibitem{BramasT17}
Quentin Bramas and S{\'{e}}bastien Tixeuil.
\newblock The random bit complexity of mobile robots scattering.
\newblock {\em Int. J. Found. Comput. Sci.}, 28(2):111--134, 2017.

\bibitem{ClementDPIM10}
Julien Cl{\'{e}}ment, Xavier D{\'{e}}fago, Maria~Gradinariu Potop{-}Butucaru,
  Taisuke Izumi, and St{\'{e}}phane Messika.
\newblock The cost of probabilistic agreement in oblivious robot networks.
\newblock {\em Inf. Process. Lett.}, 110(11):431--438, 2010.

\bibitem{DieudonneP09}
Yoann Dieudonn{\'{e}} and Franck Petit.
\newblock Scatter of robots.
\newblock {\em Parallel Processing Letters}, 19(1):175--184, 2009.

\bibitem{FlocchiniPS19}
Paola Flocchini, Giuseppe Prencipe, and Nicola Santoro, editors.
\newblock {\em Distributed Computing by Mobile Entities, Current Research in
  Moving and Computing}, volume 11340 of {\em Lecture Notes in Computer
  Science}.
\newblock Springer, 2019.

\bibitem{FlocchiniPSW05}
Paola Flocchini, Giuseppe Prencipe, Nicola Santoro, and Peter Widmayer.
\newblock Gathering of asynchronous robots with limited visibility.
\newblock {\em Theor. Comput. Sci.}, 337(1-3):147--168, 2005.

\bibitem{FlocchiniSVY16}
Paola Flocchini, Nicola Santoro, Giovanni Viglietta, and Masafumi Yamashita.
\newblock Rendezvous with constant memory.
\newblock {\em Theor. Comput. Sci.}, 621:57--72, 2016.

\bibitem{IzumiIKO13}
Taisuke Izumi, Tomoko Izumi, Sayaka Kamei, and Fukuhito Ooshita.
\newblock Feasibility of polynomial-time randomized gathering for oblivious
  mobile robots.
\newblock {\em {IEEE} Trans. Parallel Distrib. Syst.}, 24(4):716--723, 2013.

\bibitem{SuzukiY99}
Ichiro Suzuki and Masafumi Yamashita.
\newblock Distributed anonymous mobile robots: Formation of geometric patterns.
\newblock {\em {SIAM} J. Comput.}, 28(4):1347--1363, 1999.

\end{thebibliography}

\end{document}